\newcommand{\innerproduct}[2]{\left<#1|#2\right>}
\newcommand{\outerproduct}[2]{\left|#1\right>\left<#2\right|}
\newcommand{\Tr}{\mathrm{Tr}}
\theoremstyle{plain}
\newtheorem{thm}{\protect\theoremname}
  \theoremstyle{definition}
  \newtheorem{defn}[thm]{\protect\definitionname}
  \theoremstyle{plain}
  \newtheorem{lem}[thm]{\protect\lemmaname}
  \providecommand{\definitionname}{Definition}
  \providecommand{\lemmaname}{Lemma}
\providecommand{\theoremname}{Theorem}
\begin{document}
\preprint{APS/123-QED}

\title{Implementing and characterizing precise multi-qubit measurements}

\author{J. Z. Blumoff}
\thanks{These two authors contributed equally.  Corresponding author:  kevin.chou@yale.edu}
\affiliation{Department of Applied Physics and Physics, Yale University, New Haven, Connecticut 06511, USA}
\author{K. Chou}
\thanks{These two authors contributed equally.  Corresponding author:  kevin.chou@yale.edu}
\affiliation{Department of Applied Physics and Physics, Yale University, New Haven, Connecticut 06511, USA}
\author{C. Shen}
\affiliation{Department of Applied Physics and Physics, Yale University, New Haven, Connecticut 06511, USA}
\author{M. Reagor}
\affiliation{Department of Applied Physics and Physics, Yale University, New Haven, Connecticut 06511, USA}
\affiliation{Rigetti Quantum Computing, 775 Heinz Ave, Berkeley, CA 94710}
\author{C. Axline}
\affiliation{Department of Applied Physics and Physics, Yale University, New Haven, Connecticut 06511, USA}
\author{R. T. Brierley}
\affiliation{Department of Applied Physics and Physics, Yale University, New Haven, Connecticut 06511, USA}
\author{M. P. Silveri}
\affiliation{Department of Applied Physics and Physics, Yale University, New Haven, Connecticut 06511, USA}
\affiliation{Research Unit of Theoretical Physics, University of Oulu, FI-90014 Oulu, Finland}
\author{C. Wang}
\affiliation{Department of Applied Physics and Physics, Yale University, New Haven, Connecticut 06511, USA}
\author{B. Vlastakis}
\affiliation{Department of Applied Physics and Physics, Yale University, New Haven, Connecticut 06511, USA}
\affiliation{IBM T.J. Watson Research Center, Yorktown Heights, NY 10598, USA}
\author{S. E. Nigg}
\affiliation{Department of Physics, University of Basel, Klingelbergstrasse 82, 4056 Basel, Switzerland}
\author{L. Frunzio}
\affiliation{Department of Applied Physics and Physics, Yale University, New Haven, Connecticut 06511, USA}
\author{M. H. Devoret}
\affiliation{Department of Applied Physics and Physics, Yale University, New Haven, Connecticut 06511, USA}
\author{L. Jiang}
\affiliation{Department of Applied Physics and Physics, Yale University, New Haven, Connecticut 06511, USA}
\author{S. M. Girvin}
\affiliation{Department of Applied Physics and Physics, Yale University, New Haven, Connecticut 06511, USA}
\author{R. J. Schoelkopf}
\affiliation{Department of Applied Physics and Physics, Yale University, New Haven, Connecticut 06511, USA}

\date{\today}
\begin{abstract}
There are two general requirements to harness the computational power of quantum mechanics:  the ability to manipulate the evolution of an isolated system and the ability to faithfully extract information from it.   
Quantum error correction and simulation often make a more exacting demand: the ability to perform non-destructive measurements of specific correlations within that system.
We realize such measurements by employing a protocol adapted from [S.\ Nigg and S.\ M.\ Girvin, Phys.\ Rev.\ Lett.\ {\bf 110}, 243604 (2013)], enabling real-time selection of arbitrary register-wide Pauli operators.
Our implementation consists of a simple circuit quantum electrodynamics (cQED) module of four highly-coherent 3D transmon qubits, collectively coupled to a high-Q superconducting microwave cavity.
As a demonstration, we enact all seven nontrivial subset-parity measurements on our three-qubit register.
For each we fully characterize the realized measurement by analyzing the detector (observable operators) via quantum detector tomography and by analyzing the quantum back-action via conditioned process tomography.
No single quantity completely encapsulates the performance of a measurement, and standard figures of merit have not yet emerged.
Accordingly, we consider several new fidelity measures for both the detector and the complete measurement process.
We measure all of these quantities and report high fidelities, indicating that we are measuring the desired quantities precisely and that the measurements are highly non-demolition.
We further show that both results are improved significantly by an additional error-heralding measurement.
The analyses presented here form a useful basis for the future characterization and validation of quantum measurements, anticipating the demands of emerging quantum technologies.
\end{abstract}
\pacs{here}
\maketitle

\section{INTRODUCTION}
Building on impressive progress in control \cite{martini_threshold, high_fid_ion_gates}, measurement \cite{jpc, high_fid_ion_readout}, and coherence \cite{hanhee, clocks}, experimental quantum information science is addressing increasingly complex challenges, such as quantum error correction (QEC) \cite{shor_factoring, gottesman_thesis, surface_code, hanson_qec, martinis-classicalerrorcorrection, nissim_qec} and quantum simulation \cite{universal_quantum_simulators, engineering_quantum_dynamics}.
These applications frequently call for measurements of multi-qubit properties, which can be qualitatively different from one-qubit measurements.
Crucially, measurements of correlations, rather than complete state information, require a more refined concept of non-demolition.
Strong single-qubit measurements project the system into a one-dimensional and trivial subspace, and non-demolition is guaranteed if repeated measurements agree.
This is only a necessary, but not sufficient, condition for measurements of correlations, which must project the system into a multi-dimensional subspace while maintaining coherence within that subspace--an idea with no one-qubit analog.
These measurements must be accomplished such that we learn only the desired information and no more.
Great care is required to engineer this intricate interaction of a complex and delicate quantum system with the noisy and dissipative outside world.

In principle, these measurements can be constructed from a set of primitives consisting of one-qubit measurements and a universal set of one- and two-qubit gates. 
In practice, building up these circuits is not as simple as stringing these primitives together; decoherence and residual interactions play an increasingly important role. 
Residual interactions are of particular concern as they lead to correlated and coherent errors, can scale badly as additional qubits are added, and are potentially catastrophic for quantum error correction \cite{preskill-longrange, fowler-nonlocal, martinis-metrology}.
These challenges raise two questions: how do we design hardware and software to directly implement multi-qubit measurements while addressing these concerns, and how do we reasonably characterize these measurements given different requirements?

Non-demolition multi-qubit measurements have been implemented in a variety of architectures. 
Impressively, measurements of three- and four-qubit properties have been demonstrated in superconducting and ionic systems \cite{ibmbbn-machinelearning, ibm_zzzz, blatt-colorcode, blatt-opensystem}.
Quantification of these measurements has generally consisted of measurements of simple eigenstates of the intended measurement operator.
Characterization of a multi-qubit measurement process, including the back-action, has been reported for $ZZ$ measurements in two-qubit systems, via conditioned process tomography in superconducting qubits experiments \cite{ibm_parity,leo_parity}. 
Generally, the precise implementation of measurements within larger quantum systems is an important direction for further study.  
Superconducting qubits, our chosen platform, have strong electromagnetic interactions which lead to fast and high-fidelity control and single-qubit measurement.
However, without care in engineering their environment, these same interactions can make these qubits vulnerable to crosstalk and decoherence  \cite{martinis-classicalerrorcorrection, ibm-simultaneousRB, leo_parity, gerhard_dipole}.
Among other adverse effects, this crosstalk is also likely to pollute measurements (and through back-action, the system) with extraneous information.

With these issues in mind, in this work we demonstrate a novel 3D cQED architecture which exhibits direct qubit-qubit couplings significantly smaller than the qubit linewidths.
Instead, interactions among our highly coherent and simple qubits are mediated by a common superconducting cavity. 
We use these interactions to engineer measurements of multi-qubit properties via an ancilla qubit, adapting a proposal by Nigg and Girvin \cite{nigg_stabilizer}.
We demonstrate this protocol with a 3+1 qubit system by performing all seven non-trivial three-qubit subset-parity measurements, $O_i \otimes O_j \otimes O_k : O_x  \in \left\lbrace I,Z \right\rbrace$, excluding $III$. 
This set is of particular interest: when combined with single-qubit rotations, it generates the measurements of all possible product operators, which include those needed for stabilizer-based quantum error correction \cite{gottesman_thesis}.

No single number fully characterizes a measurement, and the need for more sophisticated assessment is amplified for larger systems as they admit richer phenomena.
Accordingly, we characterize each demonstrated measurement with three experiments and discuss several figures of merit.
First, we perform an analysis similar to the measurement of computational states and report the assignment fidelity.
Second, we introduce and employ a novel form of quantum detector tomography \cite{detector_tomography} to fully extract the positive operator valued measure (POVM) \cite{mike_and_ike} that describes the realized detector.
We do several analyses based on these results, introducing two fidelity measures and a complementary measure we call the specificity.
Third, we consider the back-action induced by the measurement and reconstruct the measurement process maps using conditioned process tomography.
We further introduce two analogous fidelity measures for the measurement process.
Additionally, many error mechanisms, such as relaxation, leave a distinctive signature, and we also report results heralded by an additional measurement confirming success.
This may prove to be a useful feature, as heralded-success gates can be efficiently used for universal quantum computation \cite{knill_heralded_gates, aliferis_heralded_gates,knill_post_selected}.

\section{Implementing the measurement apparatus}
Our system is centered around a high-Q superconducting cavity (hereafter the ``cavity" and with resonance frequency $f_c$), which is used mechanically as the isolating package for our module and quantum-mechanically as an ancillary pointer state.
Four 3D transmons (with $\ket{g} \leftrightarrow \ket{e}$ transition frequencies $\left\lbrace f_i\right\rbrace$) couple to this cavity, with qubit-cavity dispersive interaction rates $\lbrace \chi_i \rbrace$ and qubit-qubit longitudinal interaction  rates $\lbrace \chi_{ij} \rbrace$.  
The simplified undriven Hamiltonian is given by

\begin{figure}[t]
\centering
\includegraphics[width=3.4in]{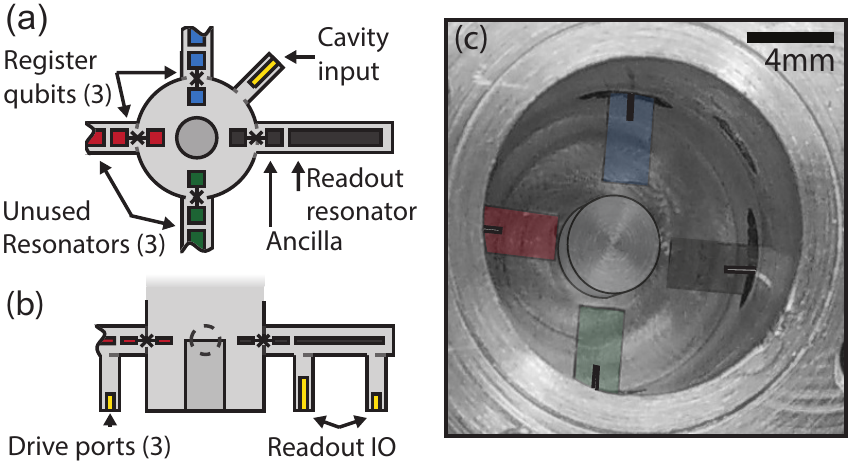}
\caption{\label{fig:2d_cartoon}{
The experimental sample consists of a central $\lambda / 4$ stub resonator \cite{matt_stub}, machined out of 6061 aluminum, with a lifetime of 72 $\mu$s, consistent with our expectation of the limitation due to surface losses.
Four sapphire chips enter the cavity radially, each of which supports a 3D transmon and a quasi-planar coaxial $\lambda / 2$ resonator \cite{coax}, patterned in the same lithographic step. All four $\lambda / 2$ resonators have undercoupled input ports for fast individual qubit control. One resonator has a low-Q ($1/\kappa = 60 \text{ ns}$) output port which leads to a Josephson Parametric Converter \cite{jpc}. This enables high-fidelity ($98\%$) readout of the directly coupled qubit, which we designate as the ancilla. The other three qubits are designated as the register, and their associated three resonators are unused. All qubits share essentially identical capacitive geometry, but differing Josephson energies space the qubits by roughly 400 MHz. This results in dispersive shifts $\lbrace \chi_i \rbrace = \lbrace 1.651, 1.194, 0.811, 0.613 \rbrace$ MHz and $\lbrace\chi_{ij}\rbrace$ generally on the order of 1 kHz.  Further Hamiltonian and coherence details are in the supplement.  The cavity has an undercoupled input port, used for conditional and unconditional displacements, and a diagnostic output (which is also undercoupled and is not depicted).  (a) and (b) depict top- and side-view schematics, respectively.  Not to scale.  (c) False color top-view of the physical device with  outlines for clarity.
}}
\end{figure}

\begin{figure*}[t]
\centering
\includegraphics[width=7.0in]{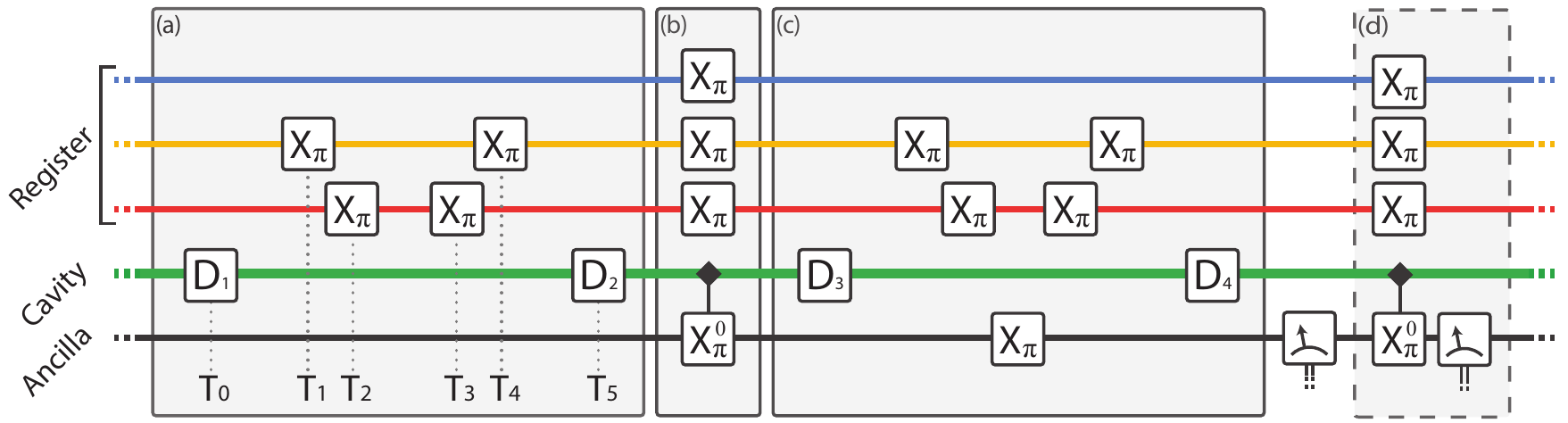}
\caption{\label{fig:ziz_msmt}{
Circuit diagram for ZIZ measurement. Steps $X_{\pi}$ refer to one-qubit rotations around the $X$ axis by $\pi$ radians.  
Steps $X^0_\pi$ indicate that the pulses are spectrally narrow and are roughly selective on having zero photons in the cavity.  Steps $D_{i}$ represent unconditional displacements of the cavity.  The meters are measurements of the ancilla via the readout resonator, which is not itself depicted. 
The ancilla has the largest dispersive shift and the register qubits are then numerically ordered (from top to bottom) such that  $\chi_1 < \chi_2 < \chi_3$.
Prior to this procedure a series of measurements is applied to post-selectively prepare the ground state, see the supplement for details.  
(a) The algorithm begins with a displacement $D_1$ to create a coherent state of $\bar{n} = 5$ photons into the cavity, which acquires a phase shift $\theta$ in a time  $T =T_5 -T_0  \approx \theta / (2\pi \chi_1)$ conditionally on the state of qubit 1 (blue).
For measurements of one- and two- qubit properties $\theta = 2\pi / 5$.
In this example, we perform a full echo on the second qubit (yellow) by performing two unconditional $X$ gates separated in time by $T_4 - T_1 \approx T/2$.  
The third qubit (red) would contribute a conditional phase shift of $2\pi\chi_3 T > \theta$.
We reduce this to $\theta$ by performing two $X_\pi$ gates separated by $T_3 - T_2 \approx \theta \left(\chi_1^{-1} - \chi_3^{-1}\right)/2$.
At $T_5$, we perform $D_2$ to shift the odd two-parity coherent-state pointer to the zero-photon state. 
Note that the overlaps between the even two-parity pointer states and the zero-photon state are exponentially suppressed.
(b) We map this photon number information onto the ancilla qubit with a $X^0_\pi$ gate, taking advantage of the well known number-splitting phenomenon \cite{number_splitting}.  
As the cavity states are separated by $\approx 6.5$ photons, we employ a faster, approximately selective gate, 300 ns in duration.  
$X_\pi$ gates on the register are centered on this pulse in time to echo away the cavity evolution during this step.
(c) To disentangle the cavity pointer states we essentially invert the pulse sequence of (a), returning the cavity to the vacuum state. 
We must also echo the ancilla as it may now be excited.
This results in a total gate length of $970$ ns. Subsequently, we measure the ancilla qubit.
(d) This optional step determines if there are residual photons in the cavity.
Since many types of errors result in residual photons, a subsequent photon-number-selective rotation and measurement of the ancilla heralds these errors.
When measuring three qubits (e.g. $ZZZ$), we choose $\theta = \pi$ so that the cavity states entangled with the one- and three-excitation manifolds recohere.
}}
\end{figure*}

\begin{align}
\label{eq:1}
H/h =& \sum_i f_i \outerproduct{e}{e}_i + f_c a^{\dag}a - \sum_i \chi_i \outerproduct{e}{e}_i a^{\dag}a \nonumber \\
&- \sum_{i, j\neq i} \chi_{ij} \outerproduct{ee}{ee}_{ij},
\end{align}

where we truncate the bosonic modes of the transmons to the two lowest energy levels. 
A more comprehensive Hamiltonian is given in the supplement.  
Our device, depicted and described in Fig. \ref{fig:2d_cartoon}, provides simultaneously strong qubit-cavity interactions and weak qubit-qubit interactions with $\chi_i /  \chi_{ij} \approx 10^3$. 
This architecture also admits independent drive and readout channels, leading to minimal classical cross-talk in both control and measurement. 
In this work we utilize the readout channel (hereafter the ``readout resonator") of only one qubit which serves as the ancilla. 
The other three qubits we collectively refer to as the register.
Our protocol uses the qubit-cavity interactions to map a property of the register state (e.g. $ZZZ$ or $ZZI$) onto the ancilla via manipulation of the cavity state.

In the following, we provide a general overview of our three-step protocol. For more details see Fig. \ref{fig:ziz_msmt}.
First, the cavity mode is displaced from the vacuum. 
It then acquires phase conditionally on the qubit states due to the dispersive interaction.
This evolution is qualitatively akin to a continuous and parallel cPHASE interaction.
As the cavity accumulates phase, we build the measurement operator qubit-by-qubit by using pairs of $X$ gates, analogous to a Hahn echo.
If we want an $I$ in the measurement operator for a given qubit, i.e. not measuring it, we can perform a ``full echo" decoupling sequence and completely average out the phase contribution of a particular qubit unconditionally (qubit 2 in Fig. \ref{fig:ziz_msmt}).
On the other hand, if we want a $Z$ in the measurement operator for a given qubit, then we want it to contribute a specific conditional phase angle to the cavity state. 
We may either allow the natural dispersive evolution to achieve this target conditional phase (qubit 1 in Fig. \ref{fig:ziz_msmt}) or precisely tune this contribution by applying a ``partial echo'' sequence (qubit 3 in Fig. \ref{fig:ziz_msmt}).

In that manner, measuring a qubit, or not, can be chosen by the timing of echoing gates.  
When all of the register qubits being measured give equal phase contributions, the phase of the cavity is a natural meter for the number of excitations in the measured subspace.  
This equalization can be considered as a stroboscopic erasure of which-path information, resulting in entanglement between a selected multi-qubit property of the register and the phase of the cavity. 
The meter can go beyond excitation counting to map other operators; for example, if the accumulated phase per qubit is $\pi$, the cavity state measures register parity regardless of the size of the register.

In the second step (Fig. \ref{fig:ziz_msmt}b), the ancilla qubit samples whether or not the cavity has acquired some chosen phase.   
An unselective displacement shifts one of the cavity states to the vacuum, converting phase information into photon number-state information.  
A spectrally narrow pulse then excites the ancilla if and only if there are zero photons in the cavity.
These first two steps are a natural multi-qubit extension of the ideas used in the qcMAP gate \cite{qcmap}.  
With the chosen property of the register now imprinted onto the ancilla state, we could measure the ancilla directly, but if we hope for the measurement to be non-destructive we must first disentangle the cavity.
In the last step (Fig. \ref{fig:ziz_msmt}c), we remove this residual entanglement, unconditionally resetting the cavity to the vacuum, by essentially inverting (or ``echoing") the unitary dynamics of the first step.  
Finally, we use this composite gate to enact a multi-qubit measurement by interrogating the readout cavity, which is sensitive only to the ancilla state.

We optionally append an additional manipulation and measurement to verify that the cavity has been reset to the vacuum (Fig. \ref{fig:ziz_msmt}d).
This condition is not achieved when we have experienced errors due to qubit or cavity relaxation, as well as certain effects from higher order terms in the Hamiltonian.

\section{MEASUREMENT CHARACTERIZATION}

With these measurements manufactured, we turn to the problem of describing them quantitatively. 
This endeavor does not have a one-size-fits-all resolution: different applications have differing needs, requiring experiments and analyses of differing complexity. 
Accordingly, we attempt to anticipate many potential desiderata, and perform several analyses on the results of three separate experiments.  
The first two experiments examine the detector alone, neglecting back-action on the input state, and the third goes on to examine the measurement process entirely.  

The first and simplest analysis provides a partial characterization of the detector.
\begin{enumerate}[label=\Alph*.]
\item Assignment fidelity:  How much desired information are we getting?  
\end{enumerate}

We expand on this with a full characterization of the detector and extract several figures of merit.
\begin{enumerate}[resume*]
\item Quantum detector tomography: What is the POVM that describes our detector?  What are we actually learning?
\item Specificity: What is the maximal measurement contrast along \emph{any} axis? And a complementary question: how close is this maximal axis to the desired axis?  
\item Detector fidelity: How close is our POVM to the desired measurement?
\end{enumerate}

We additionally may care a great deal about the back-action of the measurement, a consideration crucial for stabilizer-based QEC.
For this, we perform a third experiment to characterize the measurement process and extract pertinent figures of merit based on the quantum instrument formalism \cite{wilde_qi, ozawa_qi, davies_qi}.

\begin{enumerate}[resume*]
\item Measurement process tomography: What makes the detector click \emph{and} what happens to the state after measurement?
\item Quantum instrument fidelity: How close are the measurement processes, inclusive of back-action, to the desired measurement?
\end{enumerate}

As the quantum instrument encompasses the detector, a discrepancy between the quantum instrument and detector fidelities provides an assessment of the undesired back-action on the system.

In order to extract the detector and quantum instrument fidelities, we describe our measurements as channels that introduce the detector as a classical state in an additional Hilbert space. 
We provide more details on this interpretation as these quantities are introduced. 
This treatment allows us to appropriate commonly-used figures of merit for quantum processes and apply them to measurements.
Following the reasoning in Gilchrist et al. \cite{nielsen_process}, for each analysis we report the two sets of measures: first, the \emph{J-fidelity}, which is derived directly from the channel Jamio\l{}kowski matrix, and second, the \emph{S-fidelity}, which is a conservative measure based on the worst-case input state.
The J-fidelities are similar to measures given in \cite{dressel_measfid, magesan_measurements}.
We provide the ranges for all of these results in the main text and tables of the full results in the supplement. 
In addition to the fidelities, we also provide the analogous J- and S-distances, the latter of which is commonly called the diamond distance \cite{kitaev_qec,mike_and_ike}.

\subsection{Assignment fidelity}
A simple way to define the performance of a binary measurement is to assume the model, i.e. how sensitive is our measurement to the desired quantity (e.g. $ZIZ$)?
To answer this, we prepare states of known ideal measurement outcome, measure, and fit to find the correlation between the state preparation and experimental outcomes.
This yields a contrast and an offset, which can also be interpreted to tell how often we get the expected result.
As an example, for a simple $Z$ measurement of a qubit, it is common to prepare the computational ($\ket{0}$ and $\ket{1}$) states and to report how often the measurement outcomes agree with the state preparation.  
This is often referred to as assignment fidelity. 
The extension to multi-qubit subset-parity measurements typically involves preparations of $d$ computational states, e.g. \cite{ibm_zzzz}.

We show the results for our measurements with a similar but slightly more illustrative experiment in Fig. \ref{fig:trirabi}.
We prepare a larger number of states than required, but the result is essentially the same as we fit the data to our expected correlation.
We find contrasts well over $90\%$ for most of our measurements, and in some cases approach the limit set by our ancilla readout, showing that our detectors are highly sensitive to the operator we expected.
Assignment fidelity is a useful diagnostic tool as it is quickly measured and has a simple interpretation, however it provides limited information.
As an exaggerated example, if you expected a $Z$-sensitive detector, this present analysis would give identical results if used to examine either a random number generator or a perfect $X$-sensitive detector.
Errors like the latter, if undiscovered, would result in misleading, skewed state estimation, but are correctable with unitary control. 
A simple noisy reduction of contrast has neither of these properties.

\begin{figure}[h]
\centering
\includegraphics[width=3.4in]{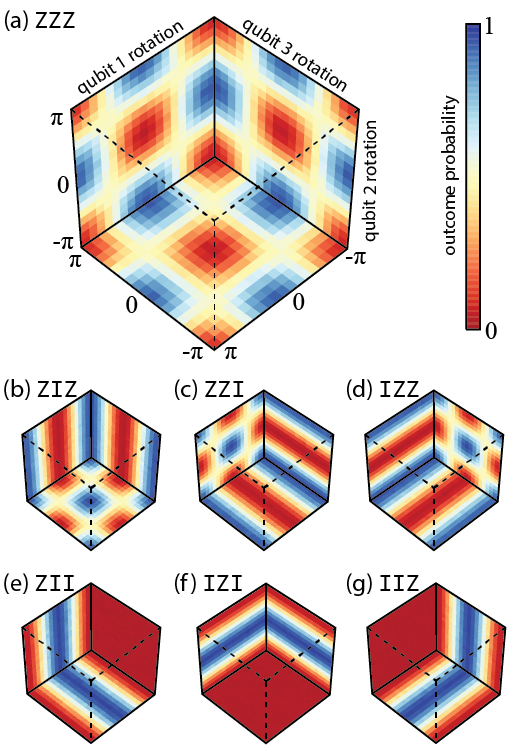}
\caption{\label{fig:trirabi}{
Demonstration of parity measurement outcomes.
For all seven non-trivial three-qubit subset parity operators we show ancilla excitation probabilities as a function of initial register state.  
In each subfigure the three axes specify the initial state of each register qubit, parameterized by rotation angle about the $X$ axis after initialization in the ground state.  
We depict three plane-cuts through that parameter space.  
The color scale indicates the probability to find the ancilla in the excited state.
The top figure (a) shows measurement operator ZZZ. 
The second row (b)-(d) shows two-qubit parity measurements.  It is easily seen that the outcome is independent of the preparation of one qubit.  
The third row (e)-(g) shows single-qubit measurements, reflecting sensitivity to only one preparation axis. 
We extract assignment fidelities from these data of $89-95\%$ that improve to $94-97\%$ with post-selection on a success herald.
}}
\end{figure}

\subsection{Quantum detector tomography}
More generally we may ask of a binary-outcome detector: for what inputs does it ``click?''
In quantum mechanics we describe these detector-outcome probability distributions with the POVM formalism. 
The measurement is represented as a set of operators $\lbrace E_i \rbrace$ with the probability of measurement outcome $i$ equal to $\Tr\left[E_i \rho \right]$ given an input state $\rho$.
As POVMs are complete, we may fully describe a binary POVM with one operator, $\lbrace E_0, E_1 \rbrace \equiv \lbrace E, I - E \rbrace$.

To characterize the detector more rigorously, we prepare a complete or over-complete set of known input states and record the measurement outcome distribution, or ``click" probability.
We reconstruct $E$ from these data with a linear inversion.
This procedure is called quantum detector tomography \cite{detector_tomography, offdiagonal_detector_tomography}.
It is essentially identical to traditional state tomography, differing only in the prior assumptions: rather than assuming we know the measurement operator, as we do in state tomography, we assume knowledge of the input state.
This knowledge is imperfect, but we note that our ground state preparation is better than $99\%$, and our one-qubit gate errors are less than $0.2\%$ as determined from randomized benchmarking \cite{knill_rb}. 
Our implementation of detector tomography is the first that we know of outside of photonic experiments for system dimensions greater than two, and the first we know of at all in superconducting systems. 
We also note that a weaker, diagonal form of detector tomography that assumes sensitivity only to $I$ and $Z$ correlations, akin to our assignment fidelity analysis, is often implicitly used to correct for measurement errors in state tomography \cite{qbus}.

\begin{figure*}[t]
\includegraphics[width=7in]{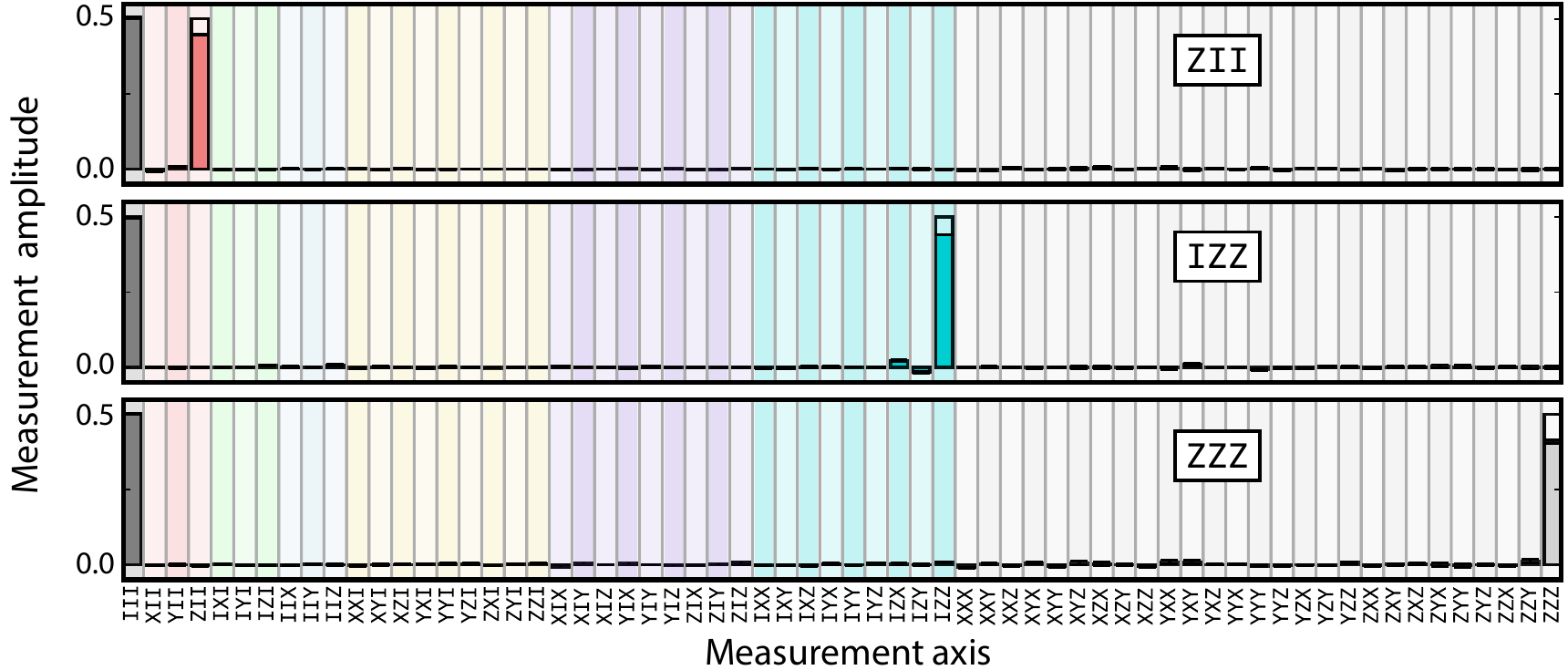}
\caption{Results of quantum detector tomography for three selected operators, using the unheralded datasets.  We expand the first element $E$ of each POVM in three-qubit generalized Pauli operators $\sigma_i$, so that $E=\sum_i c_i \sigma_i$, and show the magnitudes of the coefficients of that expansion. For measurement of a Pauli operator, each should have two non-zero bars (amplitude 0.5) corresponding to the identity and the operator of the measurement, $\sigma_m$. Deviations of the identity bar from 0.5 indicate that the meter has some bias in the detector outcome distribution.  When the amplitude of the $\sigma_m$ bar is less than 0.5, it indicates the measurement does not have full contrast along the desired axis. Finite values of the other bars indicate that our measurement has undesired sensitivity to an extraneous property.
The POVM J-fidelities for the illustrated operators are $95\%$, $94\%$, and $91\%$ respectively.
The other four realized measurement operators, as well as reconstructed POVMs from the success-heralded dataset, are provided in the supplement.
\label{fig:povms}
}
\end{figure*}

For the seven demonstrated measurements $\lbrace ZZZ$, $ZZI$, $ZIZ$, $IZZ$, $ZII$, $IZI$, $IIZ \rbrace$  we reconstruct the relevant measurement operators, which are ideally projectors onto a subspace of definite measurement outcome.  
Several examples are shown in Fig. \ref{fig:povms}, and we see that the results are close to the expected operators.
We can also distill this full measurement operator into more easily interpretable figures of merit.

\subsection{Specificity}
What is the maximum information our measurement gains about \emph{any} quantity?
Or from another point of view: is our detector infidelity due to noise and simple lack of contrast, or is it because our detector is measuring the wrong quantity?
Furthermore, is our detector biased--given a completely mixed input state, is one measurement outcome more likely than another?
This analysis is general to all strong binary measurements, but here we assume for simplicity that the ideal measurement is of a target Pauli operator, $\sigma_T$.

We express $E$ in the Pauli basis (which fully spans the space of $n$-qubit observables) as in Fig. \ref{fig:povms}, leading to a vector of Pauli coefficients.
One dimension will correspond to the identity axis, one to the desired operator, and the rest to the various other Pauli operators.
We can then find a new basis which rotates this vector space, leaving the identity and $\sigma_T$ axes invariant, such that only one other non-zero coefficient in the measurement vector remains.
This additional coefficient $c_\text{O}$ corresponds to an orthogonal rotated Pauli $\sigma_O$,

\begin{equation}
\label{eq:POVM_specificity}
E = c_I I + c_T \sigma_T + c_O \sigma_O.
\end{equation}

The coefficients of this expansion are easily interpreted: a deviation of $c_I$ from the ideal value of $0.5$ describes the bias of the detector, $2c_T$ represents the maximum possible gain of information about the quantity we wish to measure, and $2c_O$ represents the magnitude of the potential \emph{undesired} information gain.  
Note that $E$ must be a positive matrix, which yields constraints on $c_T$ and $c_O$ relative to the bias term $c_I$, e.g. a detector that always clicks cannot yield useful information. 
The infidelity corresponding to measuring along the wrong axis can be considered as analogous to coherent errors for standard processes, since a unitary rotation of the system prior to measurement would remove it.
This formulation leads naturally to a description in terms of angles between vectors, and we can quantify how well we are measuring along the correct axis with an angle for the ``specificity'' of the measurement, $\theta_s \equiv \arctan \left(c_O / c_T\right)$.
We find our measurements to be within $1-5 ^{\circ}$  of the target operator, indicating that our measurements are not yielding significant information about unwanted quantities. 
In addition, we see that our detectors have very little bias.

Returning to the initial question of maximal information gain, we recast Eq. \ref{eq:POVM_specificity} as $E = c_I I + c_{\text{max}} \sigma_{\text{max}}$ where the operator $\sigma_{\text{max}}$ is defined as the axis where the measurement gains the most information. 
The coefficient $c_{\text{max}} = (c_T^2 + c_O^2)^{1/2}$ directly quantifies the total information gain along this axis. 
As might be expected from our small $\theta_s$, we find $c_{\text{max}} \approx c_T$ for our measurements.

\subsection{Detector fidelity}

Generally, how similar are two detectors?
We extend this standard POVM formalism by describing the detector as a quantum channel \cite{wilde_qi}.
This detector channel takes a quantum state of the register $\rho_r$ as input and yields a diagonal density operator $\rho_d$ with entries that represent the detector outcome probabilities,

\begin{equation} \label{eq:detector_process}
\mathcal{E}_{\text{det}}: \rho_r \mapsto \rho_d = \sum_i \Tr \left[E_i \rho_r\right] \outerproduct{i}{i}_d.
\end{equation}

The output state of the detector $\rho_d$ is represented as a density operator but should be understood as a container for a purely classical probability distribution.  
It represents the recorded measurement outcome, \emph{not} the state of our physical ancilla transmon.
The detector channel $\mathcal{E}_\text{det}$ is a completely-positive trace-preserving map which can be described with non-square Kraus operators. 
The channels relevant to the present experiment act on an eight-dimensional (three-qubit) register and yield a two-dimensional (binary-outcome) classical detector state.

We compare the experimental detector channels to the ideal processes with two figures of merit. We start with the J-fidelity,

\begin{equation}
\mathcal{F}_{\text{J}} \left( \mathcal{E}^{(1)}, \mathcal{E}^{(2)} \right) 
\equiv 
\mathcal{F}_{\text{Tr}} \left( J^{(1)}, J^{(2)} \right),
\label{eq:Fid_J_det} 
\end{equation}

where $J^{(i)}$ is the Jamio\l{}kowski matrix representing the process $\mathcal{E}^{(i)}$ and $\mathcal{F}_{\text{Tr}}\left(\rho, \sigma \right) \equiv (\Tr \sqrt{ \rho^{1/2} \sigma \rho^{1/2} } )^2$.
For our measurements, we calculate detector J-fidelities between $91-95\%$, which improve to $95-98\%$ with post-selection on the success herald.
This error detection is efficient as the fraction of experiments discarded is similar to the improvement in fidelity. 
These measurements approach the limit set by our ancilla readout, $98\%$.
The detector J-fidelity can be similar to assignment fidelity, but is more general.
It is applicable to measurements with more than two outcomes and allows for comparison of less than full-contrast operators.
Unlike some comparable figures of merit, detector J-fidelity does not require renormalization of the POVM operators, which may discard information.

For standard non-measurement processes, the worst-case performance is often more important than the J-fidelity, and accordingly we also report the S-fidelity \cite{nielsen_process}.
This measure provides a conservative bound on the detector performance, including any possible degradation when the measurement is applied to a subspace within a larger, entangled quantum system.
This S-fidelity applied to the detector process is given by

\begin{align}
\mathcal{F}_{\text{det}} & \left(  \mathcal{E}_\text{det}^{(1)},  \mathcal{E}_\text{det}^{(2)}  \right) 
\equiv \nonumber \\
& \min_{\rho_{ra}} \mathcal{F}_\text{Tr} \Big( \mathcal{E}^{(1)}_{\text{det}} \otimes I \left(\rho_{ra}\right), \mathcal{E}^{(2)}_{\text{det}} \otimes I\left(\rho_{ra}\right)\Big), \label{eq:stab_fid} 
\end{align}

where $\rho_{ra}$ represents a joint pure state $\outerproduct{\psi}{\psi}_{ra}$ of the register and an ancillary, potentially entangled space, e.g. the rest of a quantum computer. 
We emphasize that the states resulting from the process and compared on the right hand side are states of the detector and ancillary space.
In the special case of two-outcome POVMs, we believe the state that reveals the worst-case performance will always be separable, indicating that $\mathcal{F}_{\text{det}}$ is inherently stable.
In the supplemental material we give a proof of this in the case of the two-outcome detector S-\emph{distance}, and we have numerical evidence suggesting that the two-outcome detector S-\emph{fidelity} has the same property.
We stress that the minimization is still useful as it yields the worst-case performance.
Additionally, it can be easily shown that the $\mathcal{F}_{\text{det}}$ reduces to a minimization of the (square of the) \emph{classical} fidelity of the probability distribution of the detector outcomes.
The minimization over input states is performed using a semidefinite programming package in {\sc MATLAB} \cite{cvx_software, cvx_book}.
With this measure, the seven detector fidelities we report are between $88-95\%$, which improve to $93-97\%$ with post-selection on the success herald.  
We see that for our realized detectors the J-fidelities are $1-3\%$ better than the worst-case performance.

\subsection{Measurement process characterization}

We have now characterized the behavior of the detector, but what happens to the input state after a measurement result is recorded?  
What is the back-action of registering a ``click'' or no ``click''?
Quantum error correction, for example, demands that the measurements must be highly quantum non-demolition in the sense of a von Neumann measurement. 
When an ancilla measurement indicates an outcome, e.g. that the register has positive $ZIZ$, the quantum process performed is ideally a projector on to the specified subspace. 
One method of analysis is to describe this process by two trace-non-preserving maps on the register Hilbert space, $\lbrace F^0, F^1 \rbrace$. 
We quantify these maps by performing outcome-dependent quantum process tomography, which has been previously demonstrated for two-qubit measurements \cite{ibm_parity, leo_parity}.  

This reconstruction begins by preparing a complete set of initial register states.
For each initial state we perform our measurement, record the outcomes, and perform state tomography conditioned on those outcomes.
We employ a maximum likelihood estimation (MLE) for each state tomogram, then weight the outcome states by the probability that each measurement outcome was observed.
This subtlety leads to individually trace-non-preserving maps.
To extract the process, we then perform an additional MLE relating the input states to both sets of output states, constraining the full measurement channel to be positive and trace-preserving.  
It would be preferable to perform a single MLE, rather than two, but this problem is computationally imposing in the three-qubit case. 
The sequential approach has the additional benefit of allowing us to recalibrate drifts in our tomographic measurement operator throughout the several hours of data acquisition.

The reconstructed conditioned maps for the $ZZZ$ operator are partially shown in Fig. \ref{fig:qpt}.  
Ideal measurements of generalized Pauli operators have four real elements in the $\chi$ matrix representation. 
Each of these has amplitude $1/4$, with positive diagonal elements and off-diagonal elements which change sign between the two outcomes.
We find good qualitative agreement between this and our experimental data with a small decrease in contrast, indicating that the back-action is close to the ideal von Neumann projections.

\begin{figure}[ht!]
\centering
\includegraphics[width=3.49in]{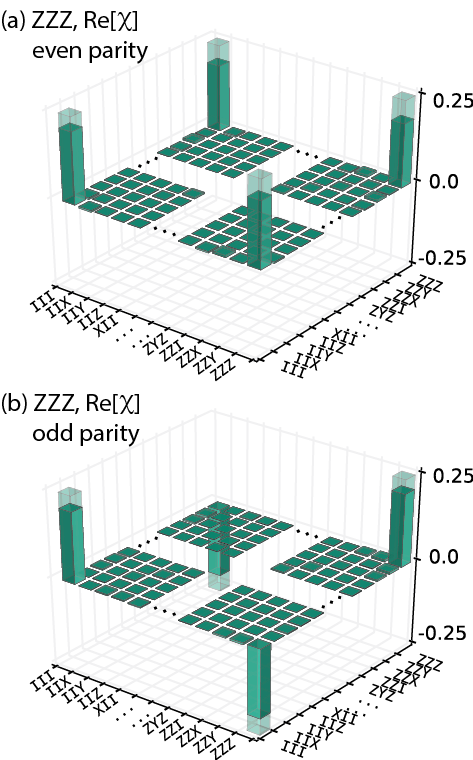}
\caption{\label{fig:qpt}{
Three-qubit conditioned quantum process tomography.
Experimental quantum process tomography (QPT) results for (a) even and (b) odd outcome process maps for the three-parity measurement, $ZZZ$. 
We express our process tomography in the Pauli basis where the conditioned processes can be described using $\chi$ matrix notation: $F^{0(1)} \left(\rho_r\right) = \sum_{ij} \chi_{ij}^{0(1)} \sigma_i \rho_r \sigma_j$ where $\lbrace \sigma \rbrace$ are the three-qubit generalized Pauli operators.  Here we show only the corners of these process matrices, all other parts are visually indistinguishable from noise. Data on the full reconstruction, including of the other six measurement operators, are given in the supplement. The ideal even and odd outcome processes are projectors $\Pi^{0(1)} = \left( III \pm ZZZ \right)/2$, and the corresponding $\chi$ matrices have a simple form consisting of only four real components in the generalized Pauli basis, and this ideal form is overlaid with wireframe bars. 
Note that we plot only the real components as all experimental imaginary components are visually indistinguishable from noise. 
We calculate the J-fidelity (as defined in section \ref{section:F_qi}) for this operator to be $80\%$.
}}
\end{figure}

\subsection{Quantum instrument fidelity} \label{section:F_qi}

How can we compare the performance of our experimentally-reconstructed measurement process to the ideal measurement?
We answer this question by representing the full measurement process as a quantum instrument \cite{wilde_qi}, an approach that parallels our previous treatment of POVMs as channels and shares many of its advantageous properties.
A quantum instrument describes a single channel that takes in a quantum state of the system and outputs both a quantum state of the system as well as a detector-outcome state. 
This detector-outcome state is the same as in Eq. \ref{eq:detector_process} and signals the conditioned back-action induced on the system input state.
For a two-outcome measurement, the quantum instrument can be written as

\begin{align}
\mathcal{E_\text{QI}} : \rho_{\text{r}} &\longmapsto F^0\left(\rho_{\text{r}}\right) \otimes\rho^0_\text{d} + F^1\left(\rho_\text{r}\right) \otimes\rho^1_\text{d}, \label{eq:process_eq_actual} \\
& \stackrel{\text{ideal}}{\longmapsto} 
\Pi^0\rho_\text{r}\Pi^0 \otimes\rho^0_\text{d} + 
\Pi^1 \rho_\text{r} \Pi^1 \otimes \rho^1_\text{d}, \label{eq:process_eq_ideal}
\end{align}

where $\rho_r$ refers to the input state in the register space, $\rho^{0(1)}_d$ refer to diagonal states in the detector subspace $(\outerproduct{0}{0}_\text{d}, \outerproduct{1}{1}_\text{d})$, and $\Pi^{0(1)}$ are projectors in the register space onto orthogonal measurement outcomes.
Similar to our previous analysis of the detectors (without back-action), we derive fidelity measures for these quantum instrument channels.
The J-fidelities for quantum instrument channels follow from Eq. \ref{eq:Fid_J_det} with the appropriate quantum instrument maps and are calculated between $67-75\%$.  These improve to $79-83\%$ with post-selection on the success herald. 
We also report the S-fidelity for these channels, 

\begin{align}
\mathcal{F}_{\text{QI}}& \left( \mathcal{E}^{(1)}_\text{QI} ,  \mathcal{E}^{(2)}_\text{QI} \right) 
\equiv \nonumber \\ 
&\min_{\rho_{ra}} \mathcal{F}_{\Tr} \Big( \mathcal{E}^{(1)} \otimes I \left(\rho_{ra}\right), \mathcal{E}^{(2)} \otimes I \left(\rho_{ra}\right)\Big), \label{eq:qi_fid}
\end{align}

following the same notational caveats as Eq. \ref{eq:stab_fid}.

For our experimental results, we calculate $\mathcal{F}_{\text{QI}}$ of $57-64\%$, which increase to $69-76\%$ with post-selection on the success herald.
We see that the worst-case performance is as much as $10\%$ worse than the J-fidelity measure.  
Note that for the success-heralded data, we include the requisite selective rotation and measurement in the definition of the process, which exposes the register to another 1.5 $\mu$s of decoherence, though we do employ a Hahn echo on all qubits in the register. 
This check could be made significantly faster with an additional, dedicated qubit with a large dispersive shift.

\section{DISCUSSION}

We have shown that the implemented multi-qubit measurements behave as intended: they are highly specific to the desired operator and have detector fidelities that approach the bound set by our single-qubit measurements. 
Additionally, we see that the back-action of the measurement is indeed close to the ideal, but unsurprisingly it is worse than our measurement contrast.

The performance demonstrated here is limited by several effects.
The largest source of infidelity for both the detectors and processes is cavity photon loss, which can be greatly reduced by moving to high-purity etched aluminum.
This is known to give a factor of 10-20 improvement in quality factor in similar systems \cite{matt_stub}, and in similar samples has also increased qubit relaxation times \cite{reinier}. 
The second-largest imperfection in the process performance is the dephasing of one of the register qubits, which displayed a significant low-frequency beat in Ramsey experiments during this experimental run. 
Both detector and process suffer from the low $T_1$ of only 20 $\mu$s for the ancilla, which is limited by the Purcell effect and which may be improved with a Purcell filter \cite{coax}. 
The next-largest imperfections for the detector performance are the relaxation rates of the register qubits.
Additional significant sources of error are the finite cavity anharmonicity and the difficulty in doing photon-number unconditional $X$ gates when photons are present. 
These two effects, as well as other coherent errors, may be circumvented by engineering a sequence via optimal control techniques \cite{grape}, which is feasible for small modules that have weak interactions with their environment. 
The process J-fidelities we obtained are consistent with numerical simulations \cite{qutip} that include all known error sources.

We are unable to directly compare our results to preexisting results in the field, but the most related experimental work \cite{ibm_parity,leo_parity} and theoretical work \cite{magesan_measurements,dressel_measfid} cite fidelity measures that are similar to our quantum instrument J-fidelity (or its square root).
We reiterate that our quantum instrument J-fidelities are as much as $10\%$ higher than our quantum instrument S-fidelities.
In many cases the worst-case performance is of greater importance, as in fault-tolerance considerations.
Indeed, we believe that it may be interesting to incorporate $\mathcal{F}_{\text{QI}}$, or the analogous quantum instrument diamond distance, into threshold calculations, since it directly provides a conservative performance estimate of the operation central to stabilizer-based error correction.
As a more ``compiled'' operation, the full measurement may be more sensitive to non-idealities than concatenated fidelity estimates of smaller one- and two-qubit operations.

A more traditional $ZZZ$ measurement would consist of three \textsc{CNOT} gates and a single-qubit measurement of an ancilla.
Measures derived from the quantum instrument not only take the performance of those simple operations into account, but also account for negative effects due to decoherence and residual interactions among the rest of the qubits.
This includes the duration of the ancilla measurement, which by itself often has unintended and detrimental effects on other qubits in some systems.
Though this analysis requires reconstruction of the process, it may not be overly burdensome for several-qubit stabilizers if compressed sensing techniques are employed \cite{PhysRevLett.105.150401}.

\section{CONCLUSION}

We have demonstrated a versatile quantum gate ideally suited to our highly coherent 3D cQED architecture and used it to enact high-fidelity and specific multi-qubit measurements.  
It is possible to realize strong interactions between fixed-tuned qubits in this system, despite having low direct couplings, and there are clear pathways to further improved performance.
As quantum systems continue to grow in complexity, it will be crucial to build systems with low cross-talk and residual interactions like the one demonstrated here.
We have also presented several new approaches to analyze and characterize measurements, including two new conservative measures to quantify the fidelity of detectors and measurement processes.
Complex measurements within larger systems will become increasingly important, and the figures of merit introduced in this work may prove to be useful tools to benchmark their performance.

\section{ACKNOWLEDGEMENTS}

We acknowledge Reinier Heeres and Philip Reinhold for helpful discussions and software assistance.  We acknowledge Katrina Sliwa, Anirudh Narla, and Michael Hatridge for providing our JPC and advising on its installation and use.  We acknowledge Wolfgang Pfaff and Victor Albert for helpful discussions.
Facilities use was supported by the Yale SEAS cleanroom, YINQE and NSF MRSEC DMR-1119826. This research was supported by the Army Research Office under Grant No.\ W911NF-14-1-0011 and by Office for Naval Research under Grant No.\ FA9550-14-1-0052. C.\ A.\ acknowledges support from the NSF Graduate Research Fellowship under Grant No.\ DGE-1122492. S.\ M.\ G.\ acknowledges additional support from the NSF under Grant No.\ DMR-1301798.  M.\ S.\ acknowledges additional support from the Alfred Kordelin Foundation.  S.\ E.\ N.\ acknowledges support from the Swiss NSF.

\clearpage
\onecolumngrid
\setcounter{figure}{0}
\setcounter{section}{0}
\begin{center}
\textbf{Supplementary Material}
\end{center}

\section{Hamiltonian details}
We give a more comprehensive Hamiltonian in Eq. \ref{eq:fullH}.  We label the ancilla either directly by name or abbreviated as qubit A, and we refer to the register qubits as qubits B, C, or D. The experimentally found values, presented in Fig. \ref{fig:fullH_fig}, are generally within $10\%$ of values predicted from simulation in Ansys HFSS and black box quantization \cite{bbq}. 

\begin{align}
\label{eq:fullH}
H_\mathrm{qubit}/h &= \sum_{i} f_i \outerproduct{e}{e}_i + \left(2f_i - \alpha_i\right) \outerproduct{f}{f}_i \\
H_\mathrm{cavity}/h &= f_c a^{\dag}a - \frac{K}{2} a^{\dag}a^{\dag}aa\\
H_\mathrm{interaction}/h &= -\sum_i \chi_i \outerproduct{e}{e}_i a^{\dag}a - \sum_{i, j\neq i} \chi_{ij} \outerproduct{e}{e}_i\outerproduct{e}{e}_j -\frac{\chi_i'}{2}\outerproduct{e}{e}_i a^{\dag}a^{\dag}aa\\
H &= H_\mathrm{qubit} + H_\mathrm{cavity} + H_\mathrm{interaction}
\end{align}

The qubit frequencies were identified via Ramsey oscillations, and the cavity frequency was found by displacement, waiting, and a second opposing displacement with varying phase. The periodicity of the probability of the cavity having zero photons then is an interferometric indicator of the detuning. 
Repeating this experiment at various displacement amplitudes yields both the Kerr $K$ and the bare frequency $f_c$.  
The frequencies, $\chi_i$ shifts, and $\chi_{ij}$ shifts are defined in the main text.
Register qubit $\chi$ shifts are found by performing the cavity resonance frequency experiment with and without a qubit excited.  Repeating this with initial coherent states with different average photon populations yields the number-state dependent dispersive shift (or qubit-state dependent Kerr) $\chi'$ as the slope and $\chi$ as the y-intercept.  The $\chi$ of the ancilla to the storage cavity is found via qubit state revivals in the presence of cavity photons, and a series of these experiments over initial coherent states with different average photon populations also yields $\chi'$.  These experiments are described more fully in the supplement of \cite{qcmap}.

We measure the direct qubit-qubit coupling $\chi_{ij}$ for all six pairs of qubits. In order to find this, we use a version of a Ramsey experiment where the delay time is fixed and we vary the final $\pi/2$ phase to extract damped, phase-shifted oscillations. 
We perform a pair of these experiments on qubit $i$ with qubit $j$ either initialized in the ground state or the excited state. 
We extract the relative phase shift between these two experiments. We perform a series of these paired experiments for different Ramsey delays and fit the data to a line. The slope is $\chi_{ij}$. We show the results of these experiments in Fig. \ref{fig:zetas}. We note that five of the six couplings are on the order of 1 kHz, and the largest $\chi_{ij}$ of 22 KHz is between the the ancilla and a register qubit, the two qubits that have the smallest detuning (and largest $\chi$) to the storage cavity.
We note that the ancilla-register direct couplings are largely unimportant.  As the ancilla state is known at the end of the experiment, the interaction results in deterministic and known phase shifts.

\begin{figure}[h]
\centering
\label{fig:zetas}
\includegraphics[width=3.5in]{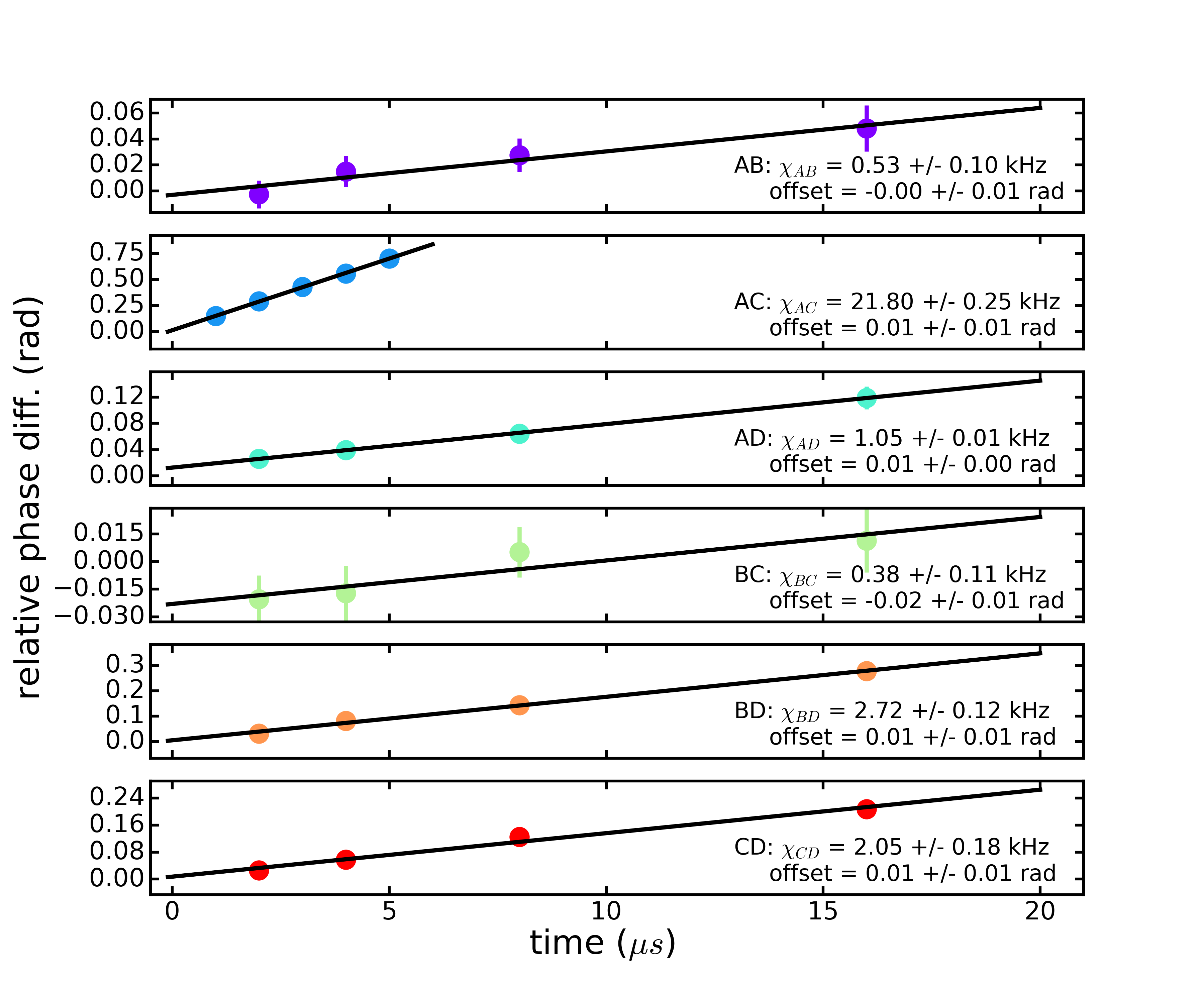}
\caption{Measurement of  $\chi_{ij}$, extracted phase vs. time from Ramsey experiments.}
\end{figure}

The next higher levels of the transmons were probed directly with two-tone spectroscopy to find $\alpha$.  
Note that as we only use direct readout of the ancilla, measurements of the register qubits (e.g. spectroscopy and coherence measurements) are done via a cascaded and non-QND mapping.  
The state of those qubits are probed via a spectrally narrow conditional cavity displacement, followed by using the ancilla to check if the cavity has been displaced from the vacuum.  This yields single-shot measurement fidelities of the register qubits on the order of 0.80.  

Our anharmonicities are on the order of 200MHz, limiting us to timescales for single-qubit operations of roughly 10 ns.  The anharmonicity of the cavity, $K$ for Kerr, is a limiting factor in our algorithm, as it is not reversed under echo. This results in residual photons and dephasing between register states entangled with different photons numbers.  The number-dependent dispersive shift $\chi'_i$, though a sixth-order term (in the expansion of the cosine Josephson energy), is on the same scale as $K$.  It results in similar non-idealities as $K$.  All other higher-order terms do not have a significant effect on this experiment. 

Note that the Hamiltonian of the readout resonator has not been included, however it shares a dispersive shift with the ancillary qubit of roughly 5 MHz, found spectroscopically.  The readout cavity does have a small dispersive `cross Kerr' shift directly with the high-Q cavity of 24 KHz, much smaller than the readout decay rate of 2 MHz.  The cross Kerr was measured via stark shift.  We estimate dispersive shifts with the register qubits on the order of 100 Hz or lower based on additional stark-shift measurements.  These effects may be mediated by the unused planar resonators, which are detuned approximately 50, 350, and 360 MHz from the readout resonator.

\begin{figure}[h]
\centering
\label{fig:fullH_fig}
\includegraphics[width=3.5in]{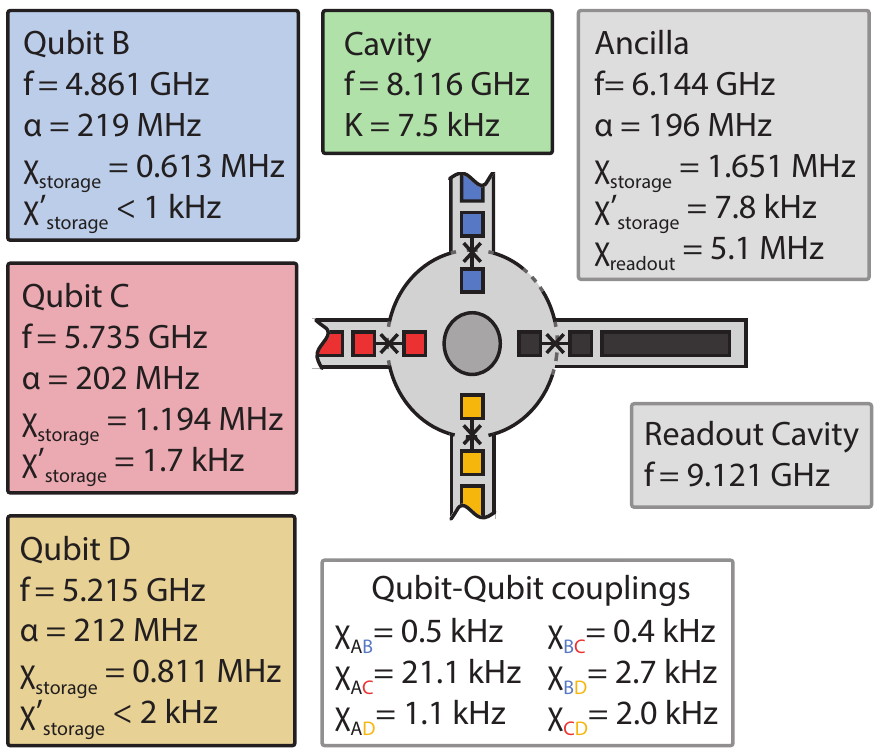}
\caption{Magnitude of Hamiltonian terms in our system.  The functional form of H is given in Eq. \ref{eq:fullH}}
\end{figure}

\section{Coherence Details}
We characterize qubit coherence with three time constants: free energy decay $T_1$, Ramsey oscillation decay $T_{2}^{\text{Ramsey}}$, and Hahn (single) echo decay $T_{2}^{\text{Echo}}$.  In general we use $T_{2}^{\text{Echo}}$ as the relevant coherence time as our algorithm naturally performs several spin echos, and we add echo pulses in steps where they do not occur naturally (e.g. during ancilla measurement and during the optional error detection rotation).  All times are given in microseconds.

\begin{table}[h]
\begin{tabular}{l | c c c }
\hline \hline
qubit & $T_1$ & $T_{2}^{\text{Ramsey}}$& $T_{2}^{\text{Echo}}$ \\ \hline
ancilla & 23(2) & 19(2) & 26(3) \\
register B & 86(5) & 57(8) & 73(10) \\
register C & 87(5) & 62(10) & 77(7) \\
register D & 58(18) & * & 52(8) \\
\hline \hline
\end{tabular}
\end{table}

Note that error in these numbers does not come principally from fitting errors, but rather actual variations in lifetime over time, as is commonly seen \cite{poletto_fluc}.  Qubit D does not have a $T_{2}^{\text{Ramsey}}$ quoted as that Ramsey experiment does not show a single frequency or time constant.  That qubit also has a very large spread in $T_1$ results, which exhibits a non-Gaussian distribution.  In previous thermal cycles qubit D showed clean Ramsey oscillations with lifetimes over 60 $\mu$s.  We are unable to ascribe a cause to this behavior. 

The cavity has an relaxation time of 72 $\mu$s, which is consistent with what we would predict from surface loss for this geometry in 6061 aluminum.   In other experiments with almost identical resonators made from high purity aluminum we see relaxation times well over 1 millisecond.  The readout resonator has a lifetime of 60 nanoseconds ($Q\approx 2000$), which is set precisely by the output coupler.


\section{Single qubit and cavity control}
The single-qubit gates have Gaussian envelopes truncated at $\pm 2\sigma$ and with the derivative of those envelopes played on the opposing quadrature (DRAG).  Unselective rotations on all qubits have $\sigma = 3.5\mu$s.  Unselective cavity displacements are nominally 5 ns square pulses.  The selective rotations on the ancilla come in two lengths.  Truly zero-photon-selective rotations, used in the state preparation, success-herald, and state tomography steps, have $\sigma = 300$ ns.  A roughly zero-photon-selective pulse played in the ancilla-entangling step has $\sigma = 75$ ns.  Both are truncated at $\pm 2\sigma$. We applied single-qubit randomized benchmarking and found single-qubit gate fidelities of $99.8-99.9 \%$.    

\begin{figure}[H]
\centering
\label{fig:rb_data}
\includegraphics[width=7in]{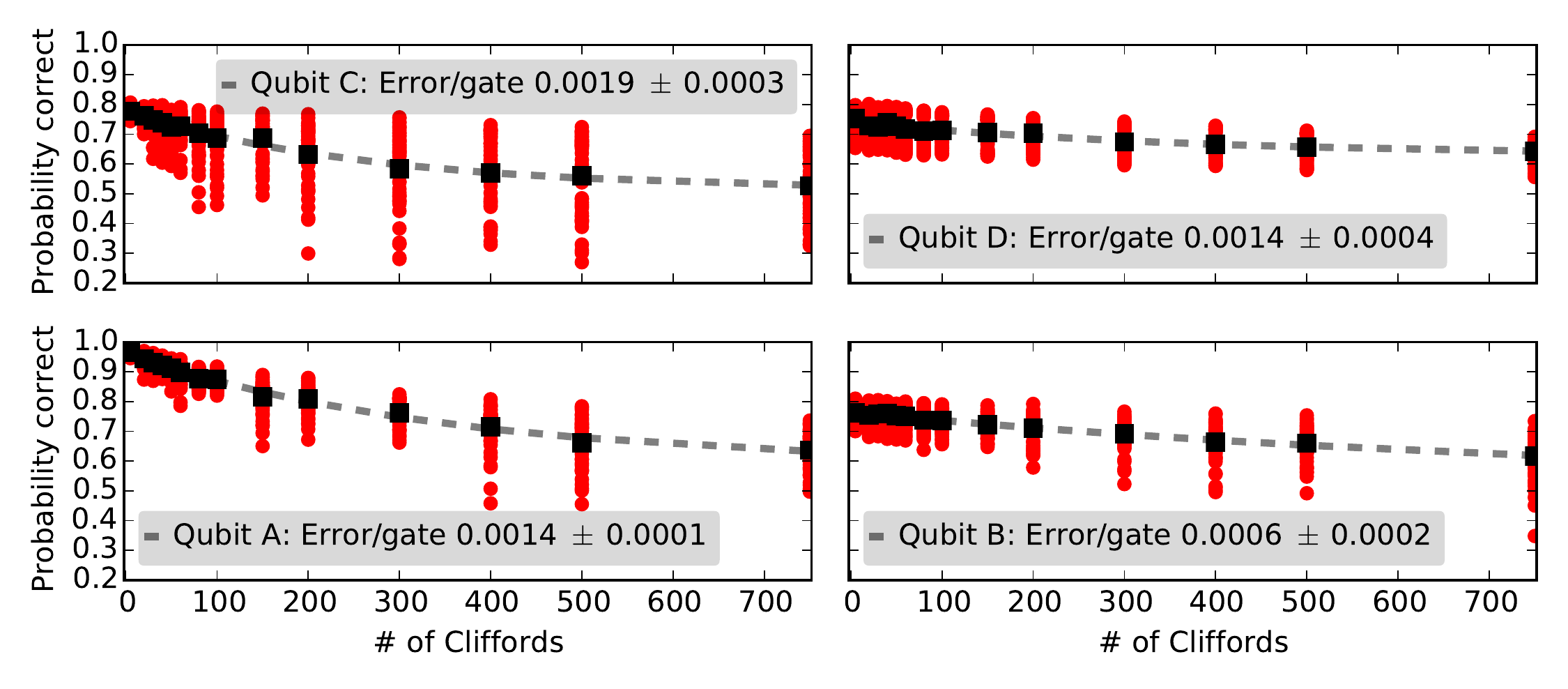}
\caption{Randomized benchmarking results for all four qubits}
\end{figure}

\section{Crosstalk}
\subsection{Direct cross-readout}
Is the measurement through the readout cavity sensitive to the state of the register qubits?  i.e., can we measure them directly?
We estimate this effect by performing a pair of experiments. In both experiments, we induce the same Rabi oscillations on the register qubit under test. In the first experiment, we measure the register qubit with a cavity-mediated readout similar to the state tomography measurement.  In the second experiment, we attempt to measure the register qubit directly via the readout cavity. The first experiment is high-contrast and serves as a calibration experiment. We fit the data to extract the amplitude, frequency, and phase of the Rabi oscillations. As the qubit behavior is nominally identical in both experiments, when fitting the results of the second experiment we constrain the frequency and phase and only allow the amplitude to vary. By comparing the relative amplitudes between the two experiments, we establish a bound on the readout contrast to be in the low $10^{-4}$ level after $100,000$ averages per point. We show the results of this experiment in Fig. \ref{fig:cross_readout}. Note that this experiment is performed with spectrally narrow pulses (Gaussian $\sigma=300$ ns) to rule out cross-driving of the ancilla itself.

\begin{figure}[h]
\label{fig:cross_readout}
\includegraphics[width=7in]{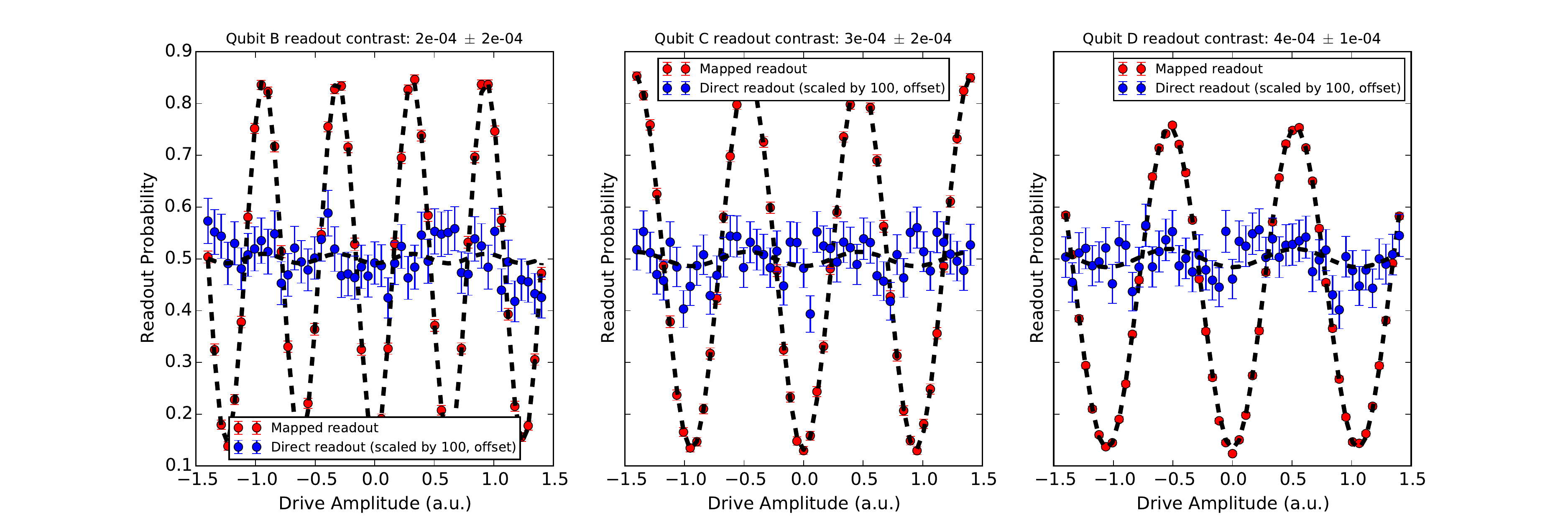}
\caption{Direct cross readout of register qubits}
\end{figure}


\subsection{Simultaneous RB}
We characterize the effect of classical cross-talk in the drives and quantum cross-talk due to any residual interactions using simultaneous randomized benchmarking \cite{PhysRevLett.109.240504}.  The full results are given in table \ref{table:srb_results}.

\begin{table*}[h]
\caption{\label{table:srb_results} Simultaneous randomized benchmarking results.  All quantities are scaled by $10^{-3}$, and the naming conventions follow \cite{PhysRevLett.109.240504}.}
\begin{ruledtabular}
\begin{tabular}{ c c c c c c c c c } 

Qubits (x,y) & $r_x$ & $r_{x|y}$ & $|dr_{x|y}|$ & $r_y $ & $r_{y|x}$ & $|dr_{y|x}|$ & $|d\alpha_{xy}|$ \\
\hline
 $A$,$C$ & 1.4(1) & 1.4(1) & 0.0(1) & 1.9(3) & 1.7(2) & 0.2(4) & 0.2(7) \\  
 $A$,$D$ & 1.4(1) & 1.1(2) & 0.3(2) & 1.4(4) & 0.7(1) & 0.7(4) & 0.3(6) \\
 $A$,$B$ & 1.4(1) & 1.6(1) & 0.2(1) & 0.6(2) & 0.8(1) & 0.2(2) & 0.5(3) \\
$C$,$D$& 1.9(3) & 2.0(1) & 0.1(3) & 1.4(4) & 0.6(1) & 0.8(4) & 0.6(4) \\
$C$,$B$& 1.9(3) & 1.6(1) & 0.3(3) & 0.6(2) & 1.0(1) & 0.4(2) & 0.7(5) \\
$D$,$B$& 1.4(4) & 0.9(1) & 0.5(4) & 0.6(2) & 0.5(1) & 0.1(2) & 0.5(2)
\end{tabular}
\end{ruledtabular}
\end{table*}

\section{Single qubit readout}
The readout resonator has a bare, low-power peak at $f_\textrm{lp} = 9120.78 \textrm{ MHz}$ with a bandwidth of 2.7 MHz.  The ancilla and readout resonator share a dispersive shift of $\chi_r = 5.1$ MHz.  The frequency of our readout tone and JPC are tuned to the average of the two ancilla state-dependent readout frequencies, roughly $f_\textrm{lp} - \chi_r/2$. We readout the state of the ancilla with a square 300 ns pulse that yields a steady-state of roughly 10 photons in the readout resonator. The signal is demodulated and integrated with a matched window, and we threshold the result.  We characterize this performance with three numbers, the probability to get a ground state or `g' result from a second measurement after already getting a `g' result once is $98.9\%$.  The probability of getting an `e' result after getting a `g' and performing a $X_\pi$ pulse is $97.1\%$.  The average, or single-qubit assignment fidelity, is $98.0\%$.  The missing accuracy in the first number is consistent with our outcome histogram overlap, and the additional missing fidelity in the second number is consistent with our $X_\pi$ fidelity and $T_1$ decay during the measurement.

\section{Cavity Q switching details}
Following a similar derivation to \cite{zaki, two_boxes} we use two high power microwave pump tones to drive a four-wave mixing process.  The pumps are detuned from the storage and readout cavities by +200 MHz, yielding a resonant decay time constant of the storage resonator as fast as 500 ns, a factor of $\sim 150$ reduction. Since this process depends on the relative detuning of the drives, we find that this time constant depends strongly on the state of the transmons. We use a chirped, 5 $\mu s$ pump tone to equalize the decay rates. The pump tone envelopes have Gaussian ring-up and ring-down envelope with a sigma of 50 ns.  

This process partially excites the ancilla, so we then perform a similar drive on the ancilla itself, with pumps detuned roughly +200 MHz from the ancilla and readout cavity.  We see coherent swapping between the ancilla and readout cavity with time constant faster than $\kappa$, and choose the first trough in the state-revival oscillations.  This occurs at total pulse time of 250 ns, largely limited by envelope of the pump tones.

\section{Dilution refrigerator setup and experiment electronics}
Microwave pulses on the qubits, cavity, and readout lines originate from CW tones generated by Vaunix LabBrick generators (LMS-802 or LMS-102), depicted by black circles containing vertical sine waves.  The microwaves for the four wave mixing pump tones, readout reference local oscillator, and JPC pump are generated by Agilent PSG and MXG generators, depicted by green circles with horizontal sine waves. The qubit and cavity drives, as well as one of the pump tones, are shaped by single sideband modulation,  using Marki IQ mixers (IQ0618LXP, IQ0618MXP,  IQ0307LXP, and IQ0307MXP).  The envelopes are generated with 65 MHz IF by three Tektronix AWG5014C arbitrary waveform generators.  

With the exception of the readout tone, all inputs are amplified at room temperature by Minicircuits power amplifiers (ZVA-183-S+), then filtered by K\&L low-pass filters (labeled "LPF", part numbers 6L250-10000/T20000-0P/0 or  6L250-12000/T26000-0/0).  The inputs are successively attenuated as they travel down the fridge by cryogenic attenuators.   Parts labeled "Ecco" are home-made, impedance matched low-pass filters, which are coaxial devices filled with Eccosorb CR-110.  The input lines see another K\&L low pass filter, then enter the CryoPerm microwave shield housing the sample, and see another Eccosorb filter.  

While the cavity itself has a (still under-coupled) output line, it is used only for diagnostics and is not relevant to this result.  The output from the readout resonator follows a standard JPC readout chain, being amplified in reflection by the JPC then passed on to a bias-tee and HEMT amplifier at 4K.
Our JPC pump tone is filtered similarly to the other input lines, however it sees a 20 GHz cutoff K\&L filter, as it is at 14.4 GHz.

Our ancilla readout is interferometric to compensate phase drifts in the readout lines.  We use two generators: one at the readout frequency $f_\textrm{readout}$, and the other $f_\textrm{ref}$ that is 50 MHz higher and acts as a reference oscillator. The signal at $f_\textrm{readout}$ is split in two. One arm is used for readout; it is gated by a signal from the AWG, and the tone travels to the readout resonator. The other half is mixed with the signal at $f_\textrm{readout}$ to generate a reference at 50 MHz. The signal that returns from the readout cavity is amplified by a Miteq amplifier (AFS4-08001200--10-10P-4) and is also mixed with this reference oscillator to create a 50 MHz signal. The 50 MHz signal and reference are both amplified (twice each) by a Stanford Research Systems amplifier (SR445A), then digitized by a two-channel 1GS DAC (Alazar ATS9870). The phase of the signal is shifted by the reference phase, correcting any phase drifts. This adjusted signal is demodulated in a digital homodyne detection, integrated with a discrimination-optimizing window, and thresholded. 

One of the four-wave mixing tones is generated as the qubit and cavity tones by single-sideband modulation.  The other is digitally gated using the Agilent MXGs built-in pulse gating.  

\begin{figure}[H]
\centering
\label{fig:wiring}
\includegraphics[width=5.5in]{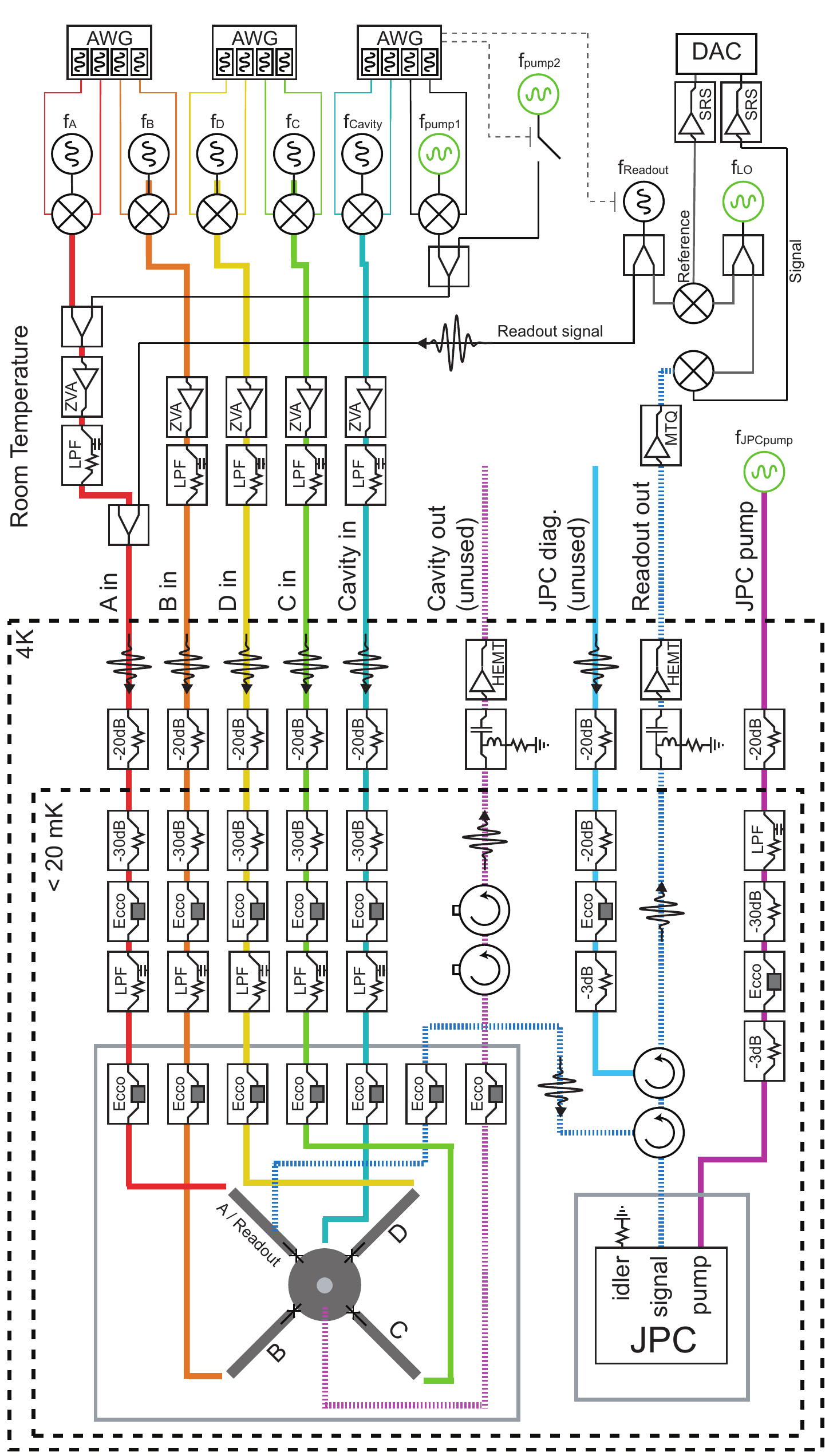}
\caption{Schematic of the dilution refrigerator wiring and experimental electronics}
\end{figure}


\section{Modifications on the original Nigg and Girvin proposal.}

The original Nigg and Girvin proposal suggested a particular hierarchy of qubit coupling strengths to the high-Q cavity: the ancilla coupling would be much greater than the sum of the register qubits couplings: $\chi_A \gg \sum_i \chi_i$. In this way, the original scheme can perform a gate selective on the ancilla state in a time that would result in negligible register-state induced cavity evolution. We have removed this constraint by applying register qubit echo sequences during long selective ancilla gates to cancel out the conditional cavity state evolution.

We have modified the cavity erasure step from the original proposal, which suggested the use of an ancilla-state dependent cavity displacement and required the coupling hierarchy described above. As discussed in the main text, after the ancilla mapping we recenter the cavity pointer states in phase space and invert the unitary evolution of the cavity mapping step to unconditionally erase the cavity and reset it to the vacuum. We remain agnostic to the state of the ancilla during this erasure by performing a full echo on the ancilla during the evolution.

The final modification from the Nigg-Girvin proposal is a method to speed up the gate for subset parity measurements that involve one and two qubits within a larger register, e.g. $ZII$ or $ZIZ$. 
In the main text we describe the cavity state as a pointer for the number of qubit excitations within the measured subspace, and each subspace will be entangled with cavity states differing in phase by $\theta$.
This mapping angle $\theta$ also sets the length of the mapping: $t_\text{map} = \theta/(2\pi\chi_\text{min})$.
In the original proposal, it was always assumed that $\theta = \pi$, leading to two cavity pointer states, one for each measurement outcome.

However, because we perform a binary mapping of the cavity pointer onto the ancilla, we really only require one result (even or odd parity) to be encoded into a single indistinguishable cavity pointer (i.e. the cavity states are the same).
For single-qubit measurements, we will by default have two such pointer states for any $\theta$. 
For two-qubit measurements, we will have in general three pointer states with one pointer state indicating the one-excitation or odd parity manifold, and two distinguishable cavity states for $\ket{gg}$ and $\ket{ee}$.
For $\theta = \pi$ the even-parity states also become indistinguishable, but this is unnecessary as we only require one indistinguishable parity pointer state--the odd-parity manifold. 
Accordingly, for these one- and two-qubit measurements, after the mapping phase angle is achieved, we displace the pointer associated with the odd-parity manifold to vacuum in order to entangle the ancilla qubit with the cavity state.
We note that this modification does not affect the ability to perform the erasure step.
Therefore, as long as we have sufficient separation (small wave-function overlap) between the cavity pointer states, we have the freedom to vary $\theta$ for mapping one- and two-qubit subset measurements.
By decreasing $\theta$, the mapping step takes less time at the cost of decreased separation between pointer states.
As a result, we add the initial displacement size as another degree of freedom to modify the timing of the algorithm.

The phase separation between cavity pointer states scales as $\theta = \chi_{min} t$, where $t$ is the length of the mapping.
The overlap between the two pointer states $\ket{\alpha}$ and $\ket{\beta}$ scales as $|\innerproduct{\alpha}{\beta}|^2 \approx e^{-\Delta}$ where $\Delta = 2n_0\left(1-\cos{\chi_{min}t}\right)$.
$\Delta$ can be understood as the distance between the two pointer states in units of photons if one state is at the vacuum, or as the square of the displacement between them in phase space.
We may then reduce the time for mapping by employing a larger initial displacement. For the one- and two-qubit measurements performed in the main text, we use $n_0 = 5$ and $\theta = 2\pi/5$, which leads to an overlap of $\sim 10^{-3}$. 
We restrict our initial displacement to these photon numbers to limit the dephasing effect of cavity self-Kerr, where different photon numbers $n$ acquire phase $\phi_n = n^2 Kt$.  
In general, we can use this modification to optimize the mapping protocol by adjusting time and initial photon population.

This speedup is not directly applicable to measurements of 3 or more qubits, and therefore in this work, we use $\theta=\pi$ for the $ZZZ$ measurement.
Though not explored, it is possible to apply this speedup to larger operators by decomposing them into a series of one- and two-qubit measurements, but without measuring or resetting the ancilla between them. 
This approach allows for the speedup method as discussed, but comes at a cost of scaling in time with the number of measured qubits $N$ as $\sim \mathrm{ceil}\left(N/2\right)$.

\section{Error budget from simulations}

Using simulations in QuTip \cite{qutip}, we have a rough estimate of our error budget for our process fidelities.  As the quantum instrument S-fidelity is not calculated quickly, these simulations were evaluated using the more conventional J-fidelity definition. Using this definition our experimental, unheralded measurements all yielded process fidelities of 0.77-0.83, and heralded datasets yielded 0.85-0.87.

From a lossless simulation with a undriven Hamiltonian of only dispersive interactions and cavity Kerr, we see roughly $5\%$  infidelity.  These we deem ``control and Hamiltonian errors,'' which include inaccuracies in delays, displacement, phases, non-orthogonality of coherent states, imperfect selectivity of our ancilla entangling pulse, and the difficulty of doing a perfectly non-selective $X_{\pi}$ gate with photons present.   We found that including direct qubit-qubit interactions has a negligible effect on the fidelity.  The number-state dependent dispersive shift has not been simulated, but likely has an effect on the same order as Kerr, however with an opposite sign.  This implies that in some situations the $\chi'$ term will actually combat the effect of cavity Kerr, albeit in a state-dependent manner.  Looking forward, we are planning to remedy these error sources using optimal control techniques.  

Including cavity loss and qubit $T_1$ and $T_2$ decreases the fidelity a further $6\%$.  We use the Ramsey oscillation time constants for $T_2$, except for qubit D as it does not exhibit clean Ramsey fringes.  For qubit D we use the Hahn echo time constant.   Roughly half of this infidelity comes from cavity photon loss.

Measurement infidelity of the ancilla itself costs a further $2\%$.  Transmon decoherence during the measurement (and during a further 300ns delay which was present only for technical reasons), costs another $2\%$.   Collectively these result in an $85\%$ fidelity, slightly above the observed $\sim 80\%$.  We ascribe the additional loss to apparent infidelity due to photon-dependence in our tomography (discussed in the state tomography portion of the supplement), along with uncertainty in treating the dephasing of qubit D and the presence of $\chi'$.

Projecting onto zero photons in simulation predicts a fidelity of $90\%$.   This  process exposes the system to an extra 1.8$\mu s$ of decoherence,  though we do apply spin-echo during this period.  Accordingly we predict an additional fidelity decay to $85\%$, in good agreement with our experimental results for heralded data which average $86\%$.   Note that as  the heralded data has zero residual photons, it does not suffer from the same potential state-tomography effects as the unheralded data.

\section{Post-Selective Cooling}

\begin{figure}[h]
\centering
\includegraphics[width=3.5in]{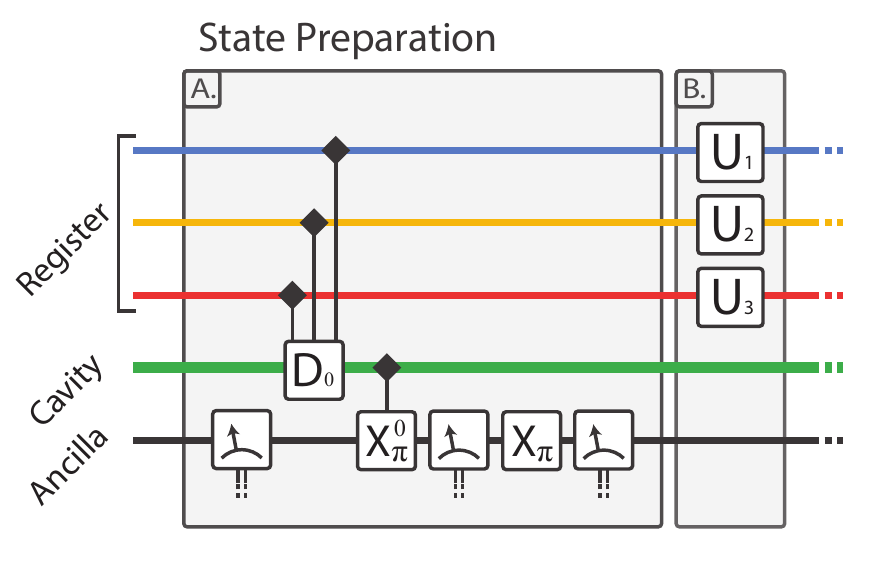}
\caption{\label{fig:state_prep}{
State preparation with post-selection. We prepare the ground state post-selectively before rotating the register into an initial state.  The first measurement confirms the ancilla is in the ground state.  We then selectively displace the cavity if the register is not in the ground state, then excite the ancilla if and only if the cavity is still in the ground state.   We then measure to confirm that the ancilla is excited.   Finally we unselectively invert the ancilla once more and check that it is in the ground state.  The measurement signature of ground, excited, ground, leads to a cold system with ground state probability of roughly $99\%$.
}}
\end{figure}

Our experimental procedure begins by performing a series of measurements for post-selective ground state preparation (depicted in Fig. \ref{fig:state_prep}, A).  We first measure the ancilla and keep the instances where we find the ancilla in the ground state.  We then displace the cavity dependent on any of the register qubits being excited, using a superposition of seven spectrally narrow displacement pulses.  We use a spectrally narrow $X_\pi$ pulse to excite the ancilla if and only if there are 0 photons, measure the ancilla, and condition on an excited state result.  We then perform an unselective $X$ gate on the ancilla followed by a measurement and condition on a ground state result.  This protocol leads to a cold system.  In this way we postselect away roughly $12\%$ of the data, consistent with independently measured background excitation probabilities.  After ground-state preparation we rotate the register qubits into a desired initial state.

Using the RPM protocol \cite{kurtis_rpm} we find initial excited-state populations of the qubits (in descending resonance frequency, starting with the ancilla), of $1.4$, $2.4$, $2.5$ and $3.5\%$.  After employing the state-preparation procedure in the main text, we find very low excited-state probabilities that are difficult to discern with $100,000$ shots.  Using the contrast of the signal, we bound the excitation probabilities to be below 0.05, 0.4, 0.08, and 0.2$\%$.  The initial cavity excitation probability is $3\%$, and after post-selection we bound its excitation probability to $1\%$.

\section{State and process tomography}

\begin{figure}[h]
\centering
\includegraphics[width=3.5in]{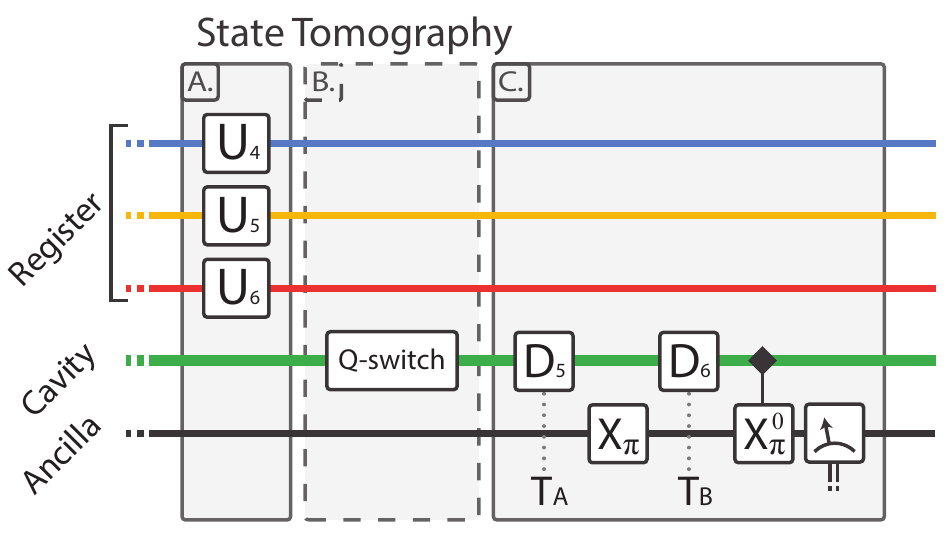}
\caption{\label{fig:tomography}{
Our tomographic step after the experiment is in three parts.  First, we rotate into the measurement basis.  Second, if we have not heralded on an empty cavity, we apply pump tones to empty the cavity, as discussed elsewhere in the supplement.  The third step flips the ancilla if and only if the register is in the ground state.
}}
\end{figure}

State tomography in this system is complicated by the fact that we have not installed individual readout lines for each register qubit.  Accordingly tomography of the register state must also be mediated by the cavity and readout via the ancilla.  The protocol we employ uses a toolbox similar to the more complicated measurements under examination, but is simpler as we do not require it to be non-demolition. After the ancilla measurement, the tomographic mapping consists of an unselective cavity displacement, a delay of 136 ns, an unselective $X_{\pi}$ gate on the ancilla, a second delay of 136 ns, and a second unconditional displacement.  The displacement phases are chosen such that the coherent states entangled with $\ket{ggg}_r \otimes \ket{g}_a$ and $\ket{ggg}_r \otimes \ket{e}_a$ are entangled with the cavity being in the vacuum state, and all other register states sufficiently displaced from the vacuum.  We then perform a selective  $X_{\pi}^0$ gate on the ancilla, and measure again.  The signature of the ancilla \emph{changing} state then is an indicator of the register being in the vacuum.  The POVM enacted is effectively equivalent to the quantum bus measurement of $\ket{ggg}\bra{ggg}_r$ in \cite{qbus}.  

We calibrate the tomographic POVM $E_\text{tomo}$ assuming that we are only sensitive to the $Z$ projection of the register and preparing the eight computational states, then performing tomography.  We actually calibrate two POVMs, one assuming that the ancilla begins in the ground state and the other assuming the excited state.   We later showed this to be a good approximation by performing full quantum detector tomography on the tomographic measurement, with one of the POVMs depicted in Figure \ref{fig:tomo_povm}.

\begin{figure*}[h]
\centering
\includegraphics[width=7in]{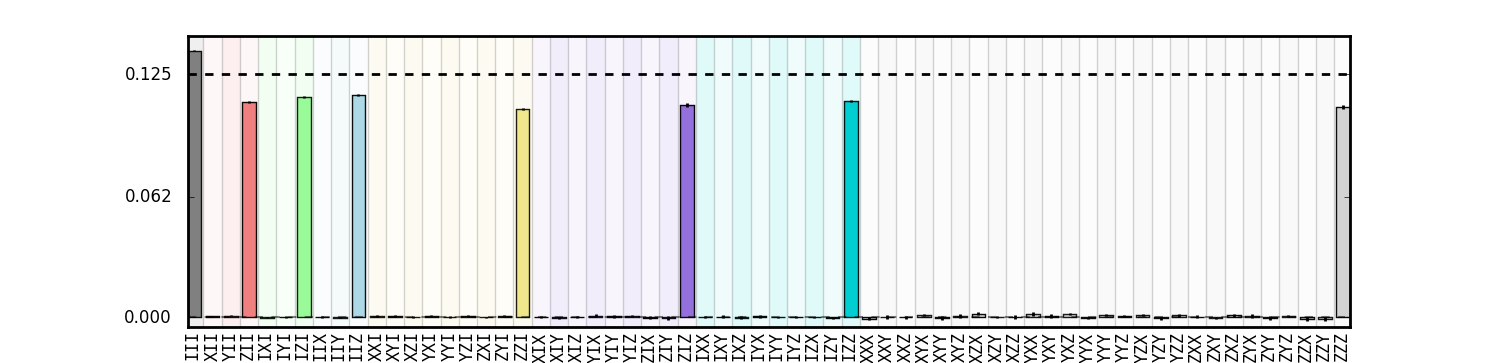}
\caption{\label{fig:tomo_povm}{
POVM of the tomography operator.
This depicts the POVM $E_\text{tomo}$ describing the tomography operator when the ancilla is initially in $\ket{g}$.  The $\ket{ggg}\bra{ggg}_r$ operator should ideally correspond to eight bars with amplitude 0.125. These results indicate the meter is only slightly biased toward the result when the ancilla measured to be in $\ket{g}$ and we have high contrast of the projector onto the register ground state. From these results we also conclude that the tomography measurement is not sensitive to Pauli operators with single qubit $X$ or $Y$ Pauli components, indicating that our calibration over only computational states is sufficient.
}}
\end{figure*}

Using this calibration, we perform tomography using an overcomplete set of post-rotations $\left\lbrace I, R_y\left(\pi/2\right), R_x\left(\pi/2\right), R_y\left(-\pi/2\right), R_x\left(-\pi/2\right), R_y\left(\pi\right) \right\rbrace ^{\otimes 3}$ on the conditioned output state.  We take $2000$-$5000$ averages per measurement. We compose a (non-square) matrix that relates the calibration data and the employed post-rotations to the expectation values of the generalized Pauli operators.  The elements of this matrix are derived from the measurement results $\lbrace m_i \rbrace = \text{Tr} \left[ E_\text{tomo}^l R_i \rho R_i^{\dag} \right] $, where $E_\text{tomo}$ is the tomographic POVM and the superscript $l$ indicates the initial state of the ancilla, conditioned on the prior measurement.   We expand $\rho$ as $\Sigma_j c_j \sigma_j$ where $\sigma_j$ are tensored Pauli operators.   This leads to the matrix $A^l_{ij} = \text{Tr} \left[ E_\text{tomo}^l R_i \sigma_j R_i^{\dag} \right]$. We can use the pseudo-inverse to perform an unconstrained least-squares inversion, but the data presented here result from a maximum-likelihood fit constraining normality and positivity of the reconstructed states. This convex optimization was performing using the CVXPY library \cite{cvxpy}.  

Process tomography is performed by injecting a complete set of initial states, using pre-rotations $\left\lbrace I, R_y\left(\pi/2\right), R_x\left(\pi/2\right), R_y\left(-\pi/2\right) \right\rbrace ^{\otimes 3}$.  The two (outcome-dependent) density matrices, which are normalized in the state tomography process, are multiplied by their respective probabilities to have occurred. 
We use these two pairs of input-output density matrices to perform a complete reconstruction of the quantum instrument superoperator. This procedure is implemented as a semi-definite program using the \textsc{CVX} package in \textsc{MATLAB} to perform least-squared fit with constraints for a positive and trace-preserving process.

One notable complication is that this mapping assumes that there are zero photons in the cavity, or at least appreciably close to zero and that the photon-number distribution is not state-dependent.  This is not generally true following our procedure when we do not herald on there being zero photons.  To unconditionally empty the cavity in this case we perform a four-wave mixing procedure, detailed later in this supplement.  The qubit decay during this period is calibrated into the $E_\text{tomo}^l$ matrices.

We have also examined the effects of these residual photons on the tomography via simulation.  Intuitively, it would take pathological behavior of the photons to over-estimate the state fidelity in this experiment, as the general result is to report measurements that should indicate "NOT $\ket{ggg}$" as "YES $\ket{ggg}$". Averaging over a full set of post-rotations this would serve to decrease contrast in all cases.  We have done time-domain simulations of the tomographic procedure with QuTip \cite{qutip}.  Using actual output states of our simulated algorithm, we have simulated fidelity as-is and after manual tracing over and removing all residual photons.  In all cases the fidelity has been higher after removal of photons, suggesting if there are residual photons affecting our tomography, the experimentally reported results are lower bounds on the state fidelity. 

\newpage
\section{Process tomography:  Full results}
\begin{figure}[H]
\centering
\includegraphics[width=7in]{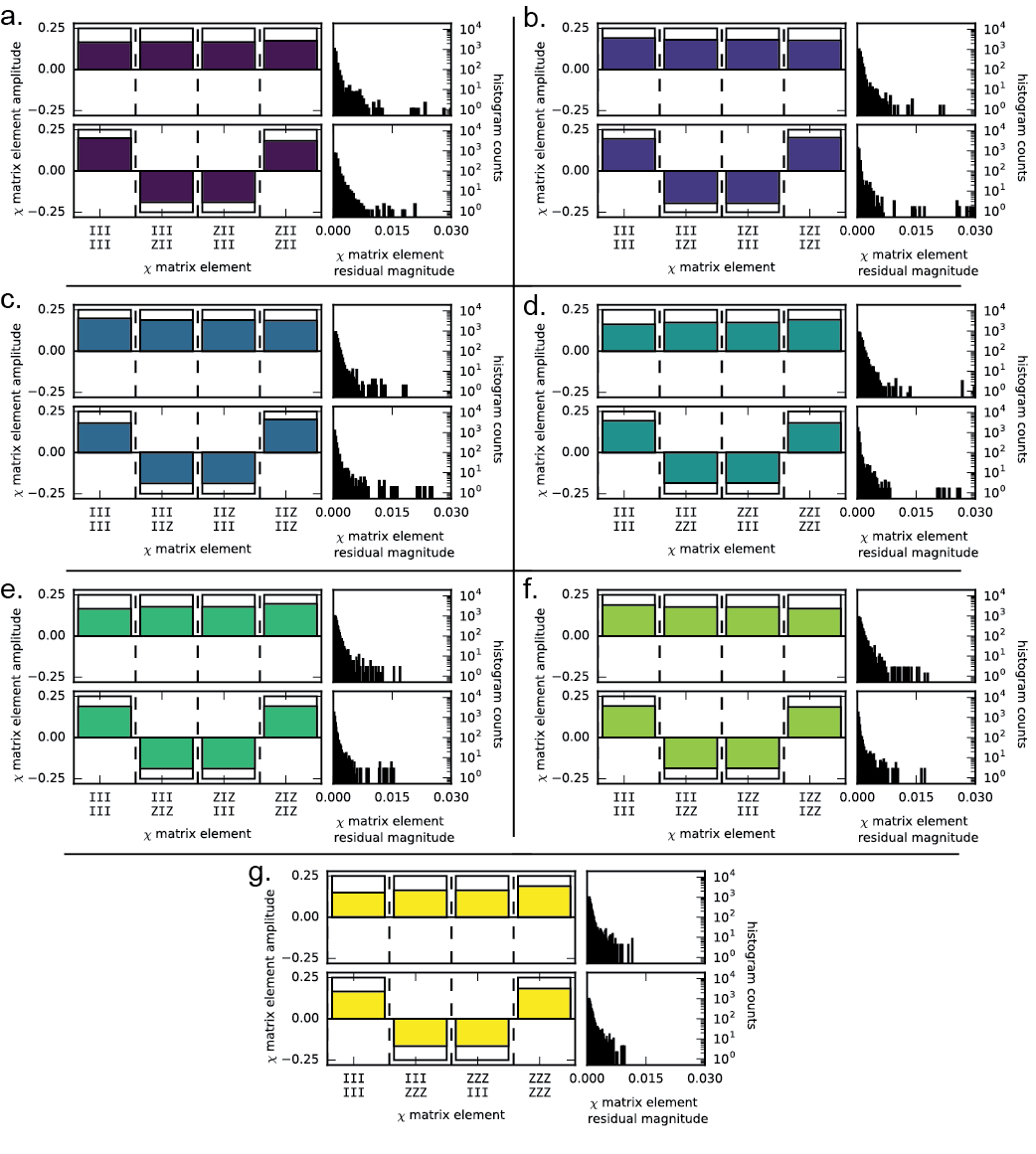}
\caption{\label{fig:all_qpt_u} Full conditioned QPT results. (a)-(g) show conditioned QPT results for all seven subset-parity operators: $ZII$, $IZI$, $IIZ$, $ZZI$, $ZIZ$, $IZZ$, and $ZZZ$, respectively. For each panel, we represent the results in the $\chi$ matrix representation for both even (top row) and odd (bottom row) outcomes. For each outcome, the left bar plot directly shows the four non-zero components of the $\chi$ matrix with the ideal amplitudes outlined with amplitude $\pm 1/4$. Note the sign-change for the off-diagonal components between even and odd outcomes for all operators. The right plot is a histogram that illustrates the magnitude for the remaining (ideally zero) components of the $\chi$ matrix. We note that they are all small and with few outliers.}
\end{figure}

\begin{figure}[H]
\centering
\includegraphics[width=7in]{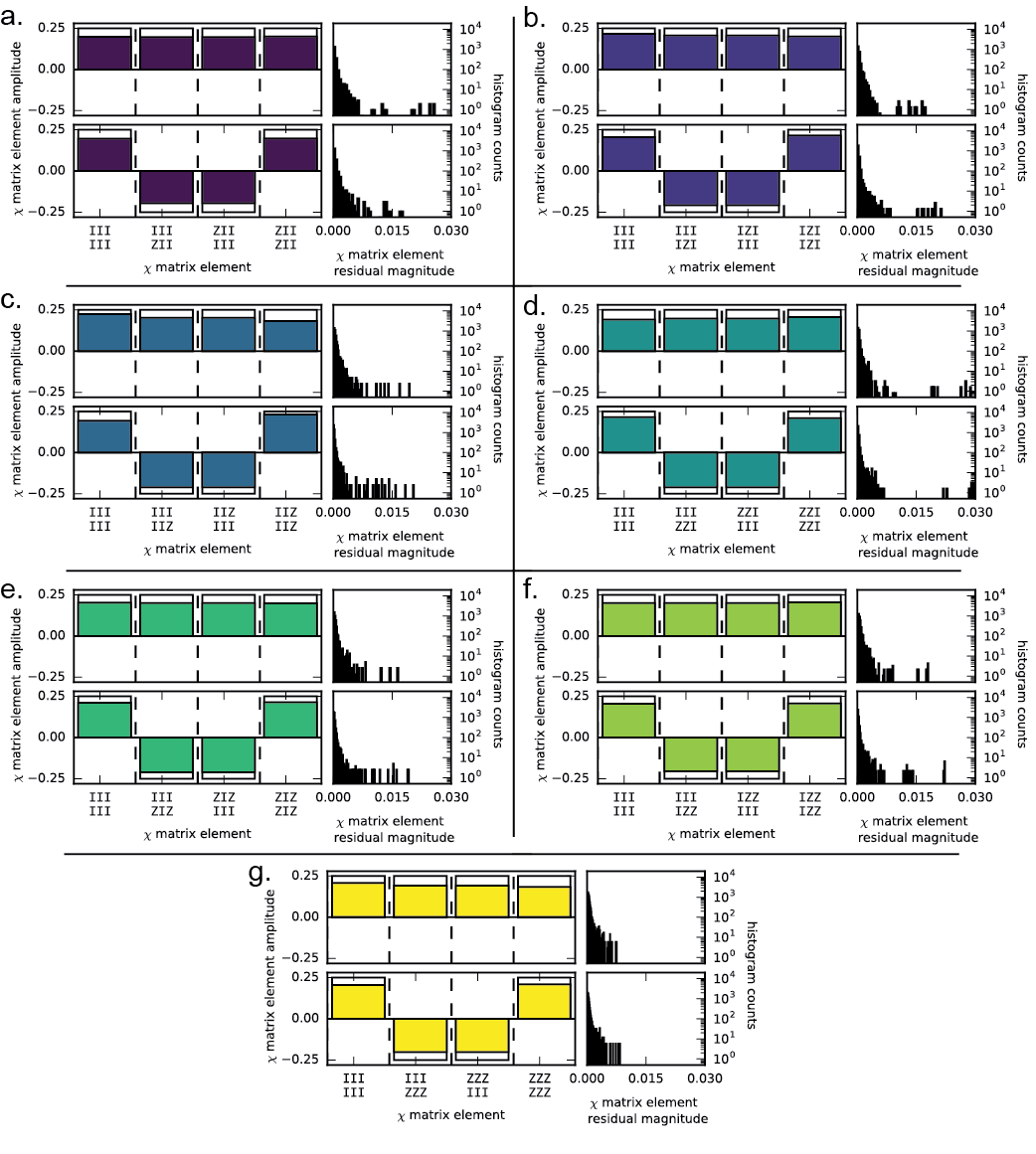}
\caption{\label{fig:all_qpt_h} Conditioned QPT results with success herald. (a)-(g) show conditioned QPT results for all seven subset-parity operators: $ZII$, $IZI$, $IIZ$, $ZZI$, $ZIZ$, $IZZ$, and $ZZZ$, respectively. The representation of this data is the same as in Fig. \ref{fig:all_qpt_u}.}
\end{figure}

\newpage
\section{Detector Tomography}
We can experimentally determine the POVM operators in a manner akin to state tomography.  We can see this by looking at the symmetry between $\rho$ and $E$ in the probability of a POVM outcome $\pi = \Tr\left[E \rho \right]$. In state tomography we assume that we know $E$. We reconstruct an unknown state $\rho$ by applying a set of post-rotations and then measure the resulting states with $E$. In POVM tomography we assume that we do not know $E$, but we do perfectly know $\rho$, e.g. the ground state. The operation is then symmetric. We describe $E$ in the Pauli basis and choose $\rho = \outerproduct{ggg}{ggg}$, the register ground state.

\begin{align}
 m_i  &= \Tr \left[ E R_i \outerproduct{ggg}{ggg} R_i^{\dag} \right]\\
 m_i  &= \Tr \left[ \sum_j c_j \sigma_j R_i \outerproduct{ggg}{ggg} R_i^{\dag} \right]\\
 m_i  &= \sum_j c_j \Tr \left[ \sigma_j R_i \outerproduct{ggg}{ggg} R_o^{\dag} \right]
\end{align}

And we have the same matrix inversion problem as in state tomography. 

We have performed detector tomography on all seven measurement operators, both without heralding on zero photons and with the success-heralding measurement. The full results from reconstructions of our detector POVM are shown in Figs. \ref{fig:povm_data_unheralded} and \ref{fig:povm_data_heralded}, respectively.  
Full results are discussed in the next section.

\newpage
\section{QDT: Full results}

\begin{figure}[H]
\label{fig:povm_data_unheralded}
\includegraphics[width=7in]{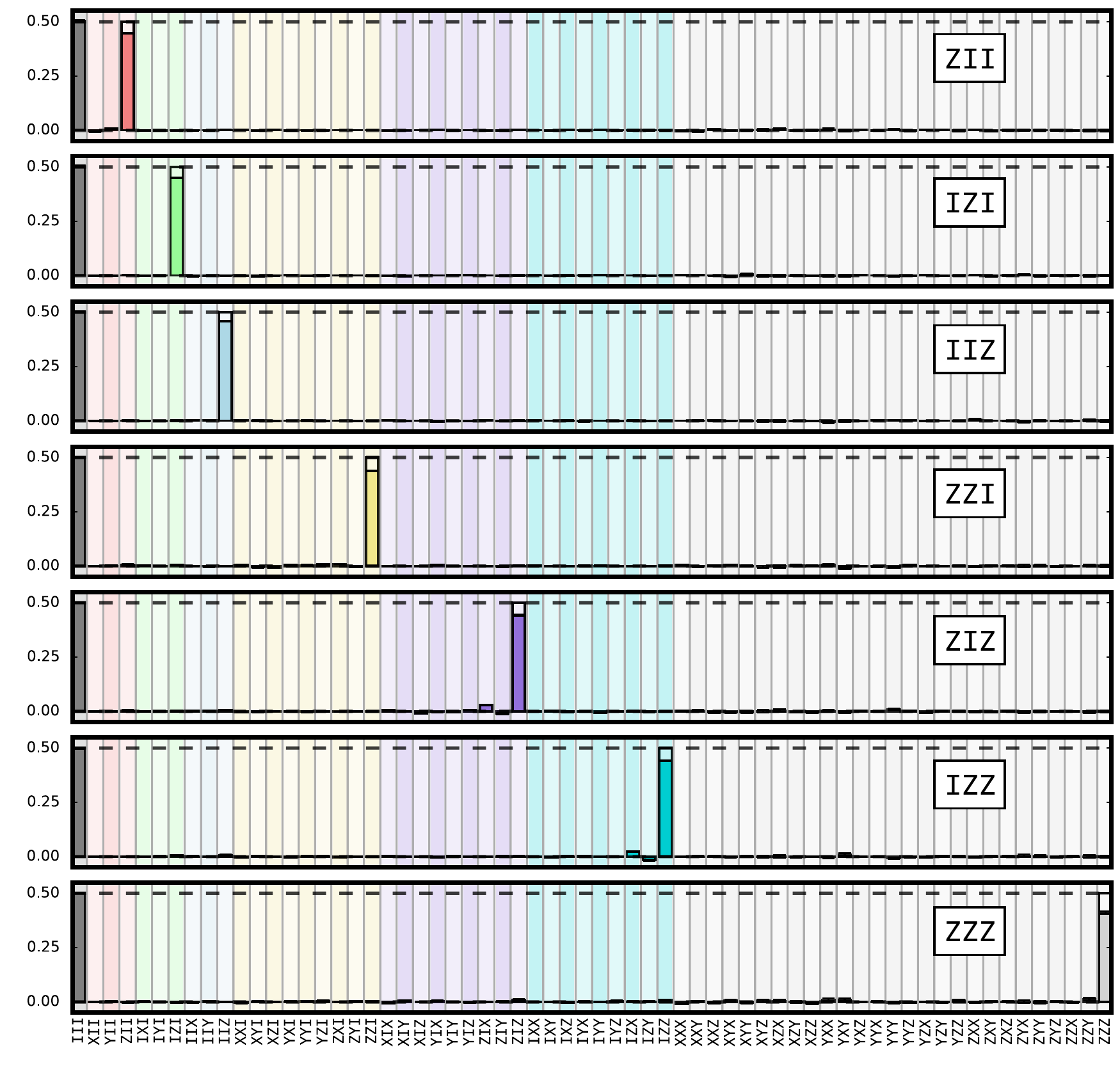}
\caption{Tomography of all POVMs, without post-selection on the success-herald}
\end{figure}

\begin{figure}[H]
\label{fig:povm_data_heralded}
\includegraphics[width=7in]{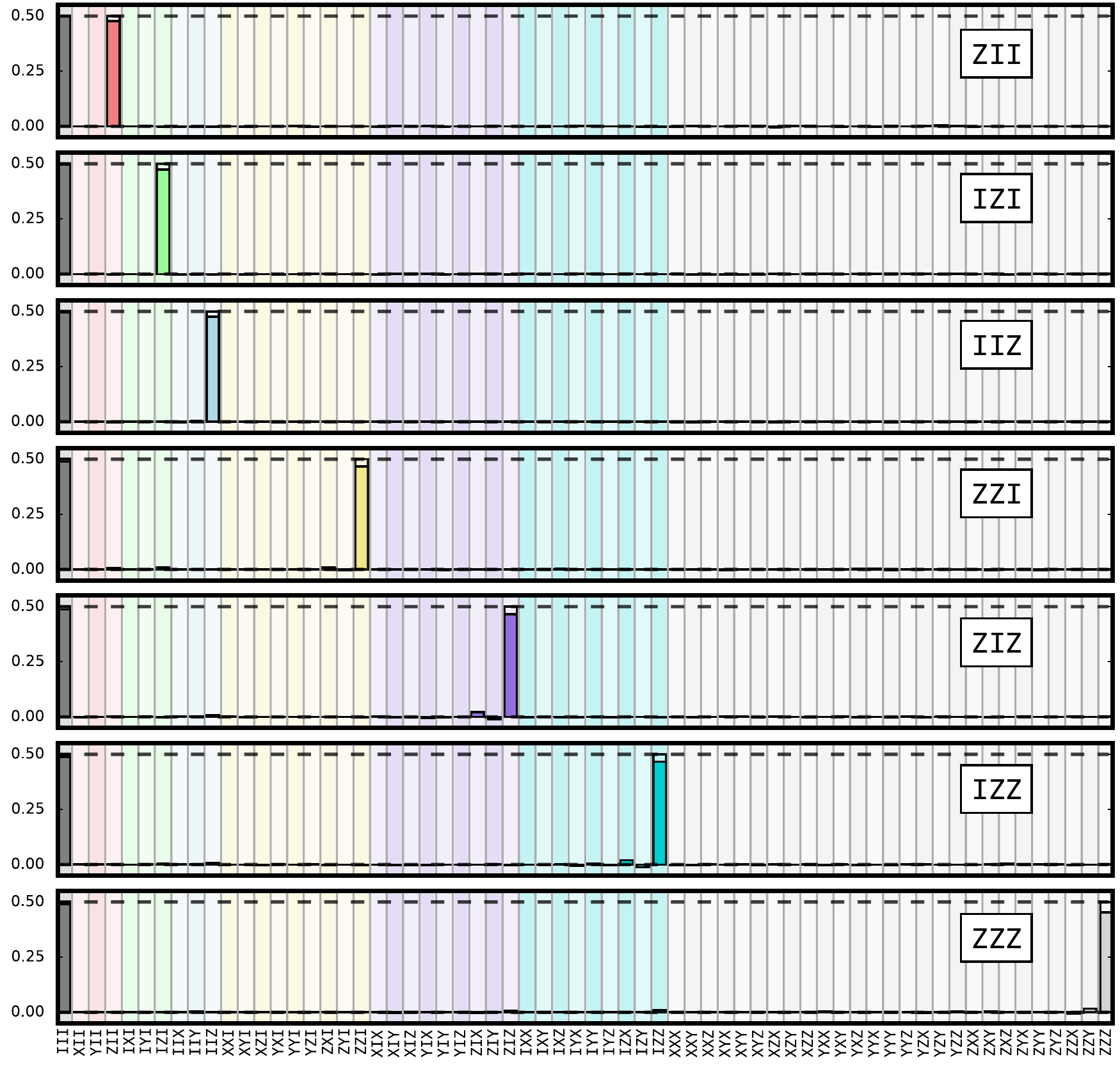}
\caption{Tomography of all POVMs, with post-selection on the success-herald}
\end{figure}

\section{Figures of merit: Full results}

We present an exhaustive list of the figures of merit used in this work.
\begin{itemize}
\item Table \ref{table:assi_fids} details the assignment fidelities for each measurement operator.
\item Table \ref{table:thetas_results} details the specificity for each measurement operator.
\item Table \ref{table:measure_def} details the formulas used for the following results, including the distance metrics.
\item Table \ref{table:povm_results} shows the J- and S-measures for the detector.
\item Table \ref{table:qi_results} shows the J- and S- measures for the measurement process.
\end{itemize}

\begin{table}[H]
\caption{\label{table:assi_fids} Assignment fidelity results}
\begin{ruledtabular}
\begin{tabular}{ c c c } 

operator & unheralded results & heralded results \\
\hline
ZII	& 0.93 & 0.97 \\
IZI	& 0.94 & 0.97 \\
IIZ	& 0.94 & 0.96 \\
\hline
ZZI	& 0.93 & 0.97 \\
ZIZ	& 0.93 & 0.96 \\
IZZ & 0.93 & 0.97 \\
\hline
ZZZ	& 0.89 & 0.94 \\
\end{tabular}
\end{ruledtabular}
\end{table}

\begin{table}[H]
\caption{\label{table:thetas_results} Specificity results.}
\begin{ruledtabular}
\begin{tabular}{ c c c } 

operator & unheralded results & heralded results \\
\hline
ZII	& 2.3$^{\circ}$ & 1.1$^{\circ}$ \\
IZI	& 1.8$^{\circ}$ & 1.0$^{\circ}$  \\
IIZ	& 1.8$^{\circ}$ & 0.9$^{\circ}$ \\
\hline
ZZI	& 3.6$^{\circ}$ & 2.3$^{\circ}$ \\
ZIZ	& 5.2$^{\circ}$ & 3.4$^{\circ}$ \\
IZZ & 4.5$^{\circ}$ & 3.4$^{\circ}$ \\
\hline
ZZZ	& 4.4$^{\circ}$ & 3.0$^{\circ}$ \\
\end{tabular}
\end{ruledtabular}
\end{table}

\begin{table}[H]
\caption{\label{table:measure_def} List of detector and process measures. We define $J_\mathcal{E}$ as the Jamio\l{}kowski matrix representing the process $\mathcal{E}$.}
\begin{ruledtabular}
\begin{tabular}{ c | c c } 
definition & fidelity & distance \\
\hline
state & 
$\mathcal{F}_\text{state} \left( \rho, \sigma \right) = \Tr \left( \sqrt{\rho^{1/2} \sigma \rho^{1/2}} \right)^2$
& 
$\mathcal{D}_\text{state} \left( \rho, \sigma \right) = \frac{1}{2} ||\rho - \sigma||_1$
\\
Jamio\l{}kowski (process) &
$\mathcal{F}_{\text{J}} \left(  \mathcal{E}, \mathcal{F} \right) = 
\mathcal{F}_{\text{J}} \left(  J_\mathcal{E}, J_\mathcal{F} \right)$
& 
$\mathcal{D}_\text{pro} \left(  \mathcal{E}_1, \mathcal{E}_2 \right) = 
\mathcal{D}_\text{state} \left(  J_\mathcal{E}, J_\mathcal{F}  \right)$
\\
Stabilized (worst case) & 
$\mathcal{F}_{\text{S}} \left( \mathcal{E}_1, \mathcal{E}_2 \right) = $
& 
$\mathcal{D}_{\text{S}} \left( \mathcal{E}_1, \mathcal{E}_2 \right) = $
\\
&
$ \min_{\rho_{SA}} \mathcal{F}_{\text{state}} \Big( \left(\mathcal{E}_1 \otimes I \right) \rho_{SA},  \left(\mathcal{E}_2 \otimes I\right) \rho_{SA} \Big)$
& 
$\max_{\rho_{SA}} \mathcal{D}_{\text{state}} \Big( \left(\mathcal{E}_1 \otimes I\right) \rho_{SA},  \left(\mathcal{E}_2 \otimes I\right) \rho_{SA} \Big)$
\\
\end{tabular}
\end{ruledtabular}
\end{table}

\begin{table}[H]
\caption{\label{table:povm_results} POVM results.}
\begin{ruledtabular}
\begin{tabular}{ c c c c c c c c c } 
 & \multicolumn{4}{c}{unheralded results} & \multicolumn{4}{c}{heralded results} \\
\cline{2-5} \cline{6-9}
operator & $F_{\text{J}}$ & $D_{\text{J}}$ & $F_{\text{S}}$ & $D_{\text{S}}$
		 & $F_{\text{J}}$ & $D_{\text{J}}$ & $F_{\text{S}}$ & $D_{\text{S}}$ \\
\hline
ZII	& 
	0.948 & 0.055 & 0.936 & 0.068 &
    0.976 & 0.025 & 0.967 & 0.032 \\
IZI	& 
	0.945 & 0.053 & 0.931 & 0.071 &
    0.975 & 0.026 & 0.963 & 0.038  \\
IIZ	& 
	0.958 & 0.044 & 0.948 & 0.054 &
    0.976 & 0.025 & 0.967 & 0.035 \\
\hline
ZZI	& 
	0.938 & 0.065 & 0.906 & 0.096 &
    0.967 & 0.036 & 0.938 & 0.065 \\
ZIZ	& 
	0.943 & 0.068 & 0.914 & 0.094 &
    0.967 & 0.043 & 0.938 & 0.068 \\
IZZ	& 
	 0.942 & 0.066 & 0.918 & 0.087 &
     0.966 & 0.043 & 0.934 & 0.073 \\
\hline
ZZZ	& 
	0.910 & 0.093 & 0.875 & 0.130 &
    0.954 & 0.050 & 0.926 & 0.078 \\
\end{tabular}
\end{ruledtabular}
\end{table}

\begin{table}[H]
\caption{\label{table:qi_results} Quantum instrument metrics.}
\begin{ruledtabular}
\begin{tabular}{ c c c c c c c c c } 
 & \multicolumn{4}{c}{unheralded results} & \multicolumn{4}{c}{heralded results} \\
\cline{2-5} \cline{6-9}
operator & $F_{\text{J}}$ & $D_{\text{J}}$ & $F_{\text{S}}$ & $D_{\text{S}}$
		 & $F_{\text{J}}$ & $D_{\text{J}}$ & $F_{\text{S}}$ & $D_{\text{S}}$ \\
\hline
ZII	& 
	0.716  & 0.373 & 0.566 & 0.501 &
    0.789  & 0.312 & 0.690 & 0.400 \\
IZI	& 
	0.757 & 0.295 & 0.615 & 0.428 &
    0.832 & 0.246 & 0.749 & 0.345 \\
IIZ	& 
	0.757 & 0.223 & 0.607 & 0.278 &
    0.827 & 0.303 & 0.756 & 0.417 \\
\hline
ZZI	& 
	0.721 & 0.264 & 0.629 & 0.315 &
    0.818 & 0.333 & 0.754 & 0.411 \\
ZIZ	& 
	0.735 & 0.297 & 0.638 & 0.388 &
    0.823 & 0.220 & 0.757  & 0.272 \\
IZZ	& 
	0.730 & 0.306 & 0.595 & 0.437 &
    0.813 & 0.238 & 0.759 & 0.272 \\
\hline
ZZZ	& 
	0.674 & 0.353 & 0.578 & 0.450 &
    0.796 & 0.221 & 0.741 & 0.272 \\
\end{tabular}
\end{ruledtabular}
\end{table}


\section{Inherent stability of two-outcome POVM fidelity}
We calculate the stable diamond norm of the difference of a pair of arbitrary 2-outcome POVM channels and construct the the optimal discrimination input state. We will show that optimal discrimination does not require an ancilla entangled with the system on which the POVMs act. This means for POVM channels the stabilized (including ancilla) and unstabilized
(no ancilla) diamond norm are equal. As a preparation we first define a few concepts and introduce a necessary lemma. 
\begin{defn}
\textbf{(POVM channel)} For a $m$-outcome, $d$-dimensional POVM,
$\{M_{\mu}\}$ where $M_{\mu}\succeq0$ and $\sum_{i}^{m}M_{\mu}=\mathbb{I}_{d\times d}$,
define the quantum-to-classical channel,
\[
\mathcal{E}(\rho)=\sum_{\mu}\Tr{(M_{\mu}\rho)}\ket{\mu}\bra{\mu}_d
\]
where $\ket{\mu}_{d}$ are the pointer states of the detector. Note that
$\mathcal{E}(\rho)$ has input dimension $d$ and output dimension
$m$. 
\end{defn}

\begin{defn}
\textbf{(Diamond norm)} For two channels $\mathcal{E}_{1}$ and $\mathcal{E}_{2}$
having the same input/output space, define the diamond norm
\[
\left\Vert \mathcal{E}_{1}-\mathcal{E}_{2}\right\Vert _{\diamond}\equiv\frac{1}{2}\max_{\rho_{SA}}\left\Vert \mathcal{E}_{1}\otimes\mathbb{I}(\rho_{SA})-\mathcal{E}_{2}\otimes\mathbb{I}(\rho_{SA})\right\Vert _{*},
\]
where $\left\Vert \cdot\right\Vert _{*}$is the trace norm (nuclear
norm), and S denotes the system $\mathcal{E}_{1}$ and $\mathcal{E}_{2}$
act on and A denotes an ancilla potentially entangled with S. Note
that due to convexity of both the trace norm and the space of density
matrices $\rho_{SA}$, the maximizer is always pure, $\rho_{SA}=\ket{\psi_{SA}}\bra{\psi_{SA}}$. 
\end{defn}

Similarly we can define a norm without the inclusion of the ancilla.
\begin{defn}
\textbf{(Unstable trace norm)} For two channels $\mathcal{E}_{1}$
and $\mathcal{E}_{2}$ having the same input/output space, define
the worst-case trace norm
\[
\left\Vert \mathcal{E}_{1}-\mathcal{E}_{2}\right\Vert _{*}\equiv\frac{1}{2}\max_{\rho_{S}}\left\Vert \mathcal{E}_{1}(\rho_{S})-\mathcal{E}_{2}(\rho_{S})\right\Vert _{*},
\]
Due to convexity of both the trace norm and the space of density matrices
$\rho_{S}$, the maximizer is always pure, $\rho_{S}=\ket{\psi_{S}}\bra{\psi_{S}}$. 
\end{defn}

\begin{lem}
\label{lem:nuclear-norm-maxim}Let $A$ and $B$ be arbitrary complex
$d\times d$ matrices and let $W$ be unitary with same dimension.
Using the singular value decomposition, $A=U_{1}S_{1}V_{1}^{\dagger}$,
$B=U_{2}S_{2}V_{2}^{\dagger}$, where $S_{1}$ / $S_{2}$ have decreasing
diagonal elements, then
\[
\left\Vert A^{\dagger}WB\right\Vert _{*}\le\sum_{i}\sigma_{i}(A)\sigma_{i}(B)
\]
where $\sigma_{i}(M)$ stands for the $i$-th largest singular value
of matrix $M$. The equality holds when $W=U_{1}U_{2}^{\dagger}$. \end{lem}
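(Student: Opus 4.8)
The plan is to reduce the statement to a clean inequality about diagonal matrices via the unitary invariance of the nuclear norm, and then to bound the resulting quantity with von Neumann's trace inequality. First I would substitute the singular value decompositions, using $A^{\dagger} = V_1 S_1 U_1^{\dagger}$ (since $S_1$ is real diagonal), to write $A^{\dagger} W B = V_1 \left( S_1 U_1^{\dagger} W U_2 S_2 \right) V_2^{\dagger}$. Because $\lVert \cdot \rVert_*$ is invariant under left- and right-multiplication by unitaries, this gives $\lVert A^{\dagger} W B \rVert_* = \lVert S_1 \tilde{W} S_2 \rVert_*$, where $\tilde{W} \equiv U_1^{\dagger} W U_2$ is again unitary and $S_1, S_2$ are diagonal carrying the (decreasingly ordered, nonnegative) singular values $\sigma_i(A)$ and $\sigma_i(B)$. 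The problem thus reduces to showing $\lVert S_1 \tilde{W} S_2 \rVert_* \le \sum_i \sigma_i(A)\sigma_i(B)$ for an arbitrary unitary $\tilde{W}$.

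Next I would invoke the variational (dual) characterization of the nuclear norm, $\lVert M \rVert_* = \max_{\lVert Z \rVert_{\mathrm{op}} \le 1} |\Tr(M Z)|$, with the maximum attained at a unitary $Z$. Applying this to $M = S_1 \tilde{W} S_2$ and regrouping yields $|\Tr(S_1 (\tilde{W} S_2 Z))|$. Von Neumann's trace inequality, $|\Tr(XY)| \le \sum_i \sigma_i(X)\sigma_i(Y)$, then bounds this by $\sum_i \sigma_i(S_1)\, \sigma_i(\tilde{W} S_2 Z)$. Since left-multiplication by the unitary $\tilde{W}$ leaves singular values unchanged, and since $\sigma_i(S_2 Z) \le \sigma_i(S_2)\lVert Z \rVert_{\mathrm{op}} \le \sigma_i(S_2)$ for any contraction $Z$, each factor obeys $\sigma_i(\tilde{W} S_2 Z) \le \sigma_i(B)$. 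As $\sigma_i(S_1) = \sigma_i(A) \ge 0$, I can multiply these bounds termwise and sum to obtain $|\Tr(S_1 \tilde{W} S_2 Z)| \le \sum_i \sigma_i(A)\sigma_i(B)$ uniformly in $Z$, which is exactly the claimed upper bound.

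For the equality case, substituting $W = U_1 U_2^{\dagger}$ gives $\tilde{W} = U_1^{\dagger} U_1 U_2^{\dagger} U_2 = I$, so $S_1 \tilde{W} S_2 = S_1 S_2$ is diagonal with nonnegative entries $\sigma_i(A)\sigma_i(B)$; its nuclear norm is then precisely $\sum_i \sigma_i(A)\sigma_i(B)$, saturating the bound.

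I expect the only genuinely delicate step to be the product singular-value estimate $\sigma_i(\tilde{W} S_2 Z) \le \sigma_i(B)$: the reduction by unitary invariance and the equality computation are routine, and von Neumann's inequality can be used as a black box, but some care is needed to track that the $\sigma_i$ stay correctly ordered and that a contraction $Z$ can only shrink singular values. A fully self-contained alternative that avoids quoting von Neumann's inequality would be to establish $\lVert S_1 \tilde{W} S_2 \rVert_* \le \sum_i \sigma_i(A)\sigma_i(B)$ directly via Ky Fan / majorization arguments, but invoking the trace inequality is the most economical route.
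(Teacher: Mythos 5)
Your proof is correct, and it follows the paper's strategy for the first half: both you and the paper use unitary invariance of the nuclear norm to reduce the problem to bounding $\left\Vert S_{1}\tilde{W}S_{2}\right\Vert _{*}$ with $\tilde{W}=U_{1}^{\dagger}WU_{2}$, and both verify equality at $\tilde{W}=I$ (your verification is actually cleaner and more complete than the paper's, which trails off after asserting the equality case). Where you diverge is the key inequality. The paper writes the nuclear norm as the sum of singular values and invokes, twice in effect, the weak-majorization theorem for singular values of products (Theorem IV.2.5 of Bhatia's \emph{Matrix Analysis}): $\sum_{i}\sigma_{i}(S_{1}\tilde{W}S_{2})\le\sum_{i}\sigma_{i}(S_{1})\sigma_{i}(\tilde{W}S_{2})=\sum_{i}\sigma_{i}(S_{1})\sigma_{i}(S_{2})$. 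You instead pass to the dual characterization $\left\Vert M\right\Vert _{*}=\max_{\left\Vert Z\right\Vert _{\mathrm{op}}\le1}\left|\mathrm{Tr}(MZ)\right|$, apply von Neumann's trace inequality to $\mathrm{Tr}\bigl(S_{1}(\tilde{W}S_{2}Z)\bigr)$, and then use the elementary contraction bound $\sigma_{i}(\tilde{W}S_{2}Z)\le\sigma_{i}(S_{2})$. The two routes lean on closely related classical facts of comparable depth, so neither is strictly more elementary; the paper's is shorter because the majorization theorem does all the work in one step, while yours trades that single citation for two lighter ingredients (duality plus von Neumann), which makes the argument somewhat more self-contained and makes explicit where the unitary $\tilde{W}$ and the contraction $Z$ each get absorbed. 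Your one flagged subtlety, the ordered product bound $\sigma_{i}(\tilde{W}S_{2}Z)\le\sigma_{i}(B)$, is indeed the only delicate point, and it holds exactly as you state it.
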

\begin{proof}
The trace norm is unitarily invariant, i.e. $\left\Vert M\right\Vert _{*}=\left\Vert U\cdot M\cdot\tilde{U}\right\Vert _{*}$.
So we have 
\begin{eqnarray*}
\left\Vert A^{\dagger}WB\right\Vert _{*} & = & \left\Vert V_{1}S_{1}U_{1}^{\dagger}WU_{2}S_{2}V_{2}^{\dagger}\right\Vert _{*}\\
 & = & \left\Vert S_{1}U_{1}^{\dagger}WU_{2}S_{2}\right\Vert _{*}\\
 & = & \left\Vert S_{1}\tilde{W}S_{2}\right\Vert _{*}\\
 & = & \sum_{i}\sigma_{i}(S_{1}\tilde{W}S_{2})\\
 & \le & \sum_{i}\sigma_{i}(S_{1})\sigma_{i}(\tilde{W}S_{2})\\
 & = & \sum_{i}\sigma_{i}(S_{1})\sigma_{i}(S_{2})\\
 & = & \sum_{i}\sigma_{i}(A)\sigma_{i}(B)
\end{eqnarray*}
where the important inequality used above is due to Theorem IV.2.5
of R. Bhatia's \textit{Matrix Analysis} \cite{matrix_analysis}. It is easy to check that
when $\tilde{W}=\text{diag}(e^{i\phi_{j}})$, i.e. $W=U_{1}\text{diag}(e^{i\phi_{j}})U_{2}^{\dagger}$. 
\end{proof}

Now we can prove our main result. 
\begin{thm}
\textbf{(Diamond norm of 2-outcome POVM channels) }For two POVM channels
$\mathcal{E}_{1}$ and $\mathcal{E}_{2}$ corresponding to $\{M_{1},\, M_{2}\}$
and $\{N_{1},\, N_{2}\}$, the diamond norm\textbf{ }
\[
\left\Vert \mathcal{E}_{1}-\mathcal{E}_{2}\right\Vert _{\diamond}\equiv\max\{\left|eig(M_{1}-N_{1})\right|\}.
\]
 The optimal input state is a product state $\ket{\psi_{SA}}=\ket{\phi_{S}}\otimes\ket{\psi_{A}}$,
where $\ket{\phi_{S}}$ is the eigenvector of $(M_{1}-N_{1})$ corresponding
to the eigenvalue with the largest absolute value. This is to say,
for maximal distinguishability, the ancilla is not required and that
\[
\left\Vert \mathcal{E}_{1}-\mathcal{E}_{2}\right\Vert _{\diamond}=\left\Vert \mathcal{E}_{1}-\mathcal{E}_{2}\right\Vert _{*}.
\]
\end{thm}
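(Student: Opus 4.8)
The plan is to reduce the diamond norm to an optimization over the single Hermitian operator $\Delta \equiv M_1 - N_1$ on the system space $S$, and then to control that optimization with Lemma~\ref{lem:nuclear-norm-maxim}. First I would write out the difference channel applied to a pure joint state $\rho_{SA}=\ket{\psi}\bra{\psi}_{SA}$. Because both POVMs are complete, $M_1+M_2 = N_1+N_2 = \mathbb{I}$, so $M_2-N_2 = -(M_1-N_1) = -\Delta$; since the detector pointer states are orthogonal the output is block diagonal in the detector index, and one finds
\[
\mathcal{E}_1\otimes\mathbb{I}(\rho_{SA}) - \mathcal{E}_2\otimes\mathbb{I}(\rho_{SA}) = \bigl(\ket{1}\bra{1}-\ket{2}\bra{2}\bigr)_d \otimes R, \qquad R \equiv \Tr_S\!\bigl[(\Delta\otimes\mathbb{I}_A)\rho_{SA}\bigr].
\]
The trace norm factorizes across the detector--ancilla tensor product, giving $\lVert\cdots\rVert_* = 2\lVert R\rVert_*$, so that $\lVert\mathcal{E}_1-\mathcal{E}_2\rVert_\diamond = \max_{\rho_{SA}}\lVert R\rVert_*$.

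Next I would parametrize the pure state $\ket{\psi}_{SA}$ by its coefficient matrix $C$ (normalized so that $\Tr(C^\dagger C)=1$) and use the standard partial-trace identity to rewrite $R$, up to an irrelevant transpose that leaves the trace norm unchanged, as $C^\dagger\Delta C$. The problem then becomes maximizing $\lVert C^\dagger\Delta C\rVert_*$ subject to $\Tr(C^\dagger C)=1$. The key manipulation is to write the Hermitian $\Delta$ in the symmetric sandwiched form $\Delta = \lvert\Delta\rvert^{1/2}\,W\,\lvert\Delta\rvert^{1/2}$, where $W=\mathrm{sgn}(\Delta)$ is extended to a genuine unitary. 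Then $C^\dagger\Delta C = (\lvert\Delta\rvert^{1/2}C)^\dagger W (\lvert\Delta\rvert^{1/2}C)$ is exactly the form handled by Lemma~\ref{lem:nuclear-norm-maxim} with $A=B=\lvert\Delta\rvert^{1/2}C$, yielding
\[
\lVert C^\dagger\Delta C\rVert_* \le \sum_i \sigma_i\!\bigl(\lvert\Delta\rvert^{1/2}C\bigr)^2 = \Tr\!\bigl(C^\dagger\lvert\Delta\rvert C\bigr) = \Tr\!\bigl(\lvert\Delta\rvert\, CC^\dagger\bigr) \le \lVert\Delta\rVert_\infty,
\]
where the final step uses that $CC^\dagger$ is a density operator and that the maximal expectation of $\lvert\Delta\rvert$ is its largest eigenvalue, i.e. $\max\{\lvert\mathrm{eig}(\Delta)\rvert\}$.

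Finally I would exhibit a product state saturating this bound. Taking $\ket{\psi_{SA}}=\ket{\phi_S}\otimes\ket{\psi_A}$ with $\ket{\phi_S}$ the eigenvector of $\Delta$ of largest eigenvalue in absolute value gives $R = \langle\phi_S|\Delta|\phi_S\rangle\,\ket{\psi_A}\bra{\psi_A}$, hence $\lVert R\rVert_* = \max\{\lvert\mathrm{eig}(\Delta)\rvert\}$, independently of $\ket{\psi_A}$. This matches the upper bound and proves equality. Because the optimizer is a product state the ancilla plays no role, and repeating the identical computation with $\rho_S$ alone shows $\lVert\mathcal{E}_1-\mathcal{E}_2\rVert_\diamond = \lVert\mathcal{E}_1-\mathcal{E}_2\rVert_*$.

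I expect the main obstacle to be the tightness of the upper bound. A naive submultiplicativity estimate such as $\lVert C^\dagger\Delta C\rVert_* \le \lVert C^\dagger\rVert_\infty\,\lVert\Delta\rVert_\infty\,\lVert C\rVert_*$ is too loose, since it can grow with the rank of $C$; the crucial insight is the symmetric factorization that distributes $\lvert\Delta\rvert^{1/2}$ onto \emph{both} copies of $C$, turning the estimate into the Hilbert--Schmidt quantity $\Tr(\lvert\Delta\rvert\,CC^\dagger)$ that is controlled by a density-matrix expectation value. Handling the kernel of $\Delta$, so that $W$ is genuinely unitary, is a minor technical point resolved by extending $W$ arbitrarily on $\ker\Delta$, where $\lvert\Delta\rvert^{1/2}$ vanishes.
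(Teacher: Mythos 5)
Your proof is correct, and structurally it parallels the paper's: both use the block-diagonal form of the difference channel's output to reduce the diamond norm to $\max \Vert C^{\dagger}\Delta C\Vert_{*}$ over coefficient matrices $C$ with $\Tr(C^{\dagger}C)=1$ (the paper parametrizes the input by its Schmidt decomposition, so its $C$ is the diagonal matrix $D$ of Schmidt coefficients), both rest on Lemma~\ref{lem:nuclear-norm-maxim}, and both exhibit the same saturating product state built from the top eigenvector of $\Delta=M_{1}-N_{1}$. The genuine difference is the middle estimate. The paper diagonalizes $(M_{1}-N_{1})^{T}=U\Lambda U^{\dagger}$ and applies the lemma ``twice'' to $\Vert DU\Lambda U^{\dagger}D\Vert_{*}$ to reach the weighted sum $\sum_{i}\tilde{p}_{i}\left|\lambda_{i}\right|$, then maximizes over the Schmidt weights; since $\Lambda$ itself is not unitary, that double application takes some unpacking beyond the lemma as literally stated. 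You instead factor $\Delta=|\Delta|^{1/2}W|\Delta|^{1/2}$ with $W=\mathrm{sgn}(\Delta)$ (a genuine unitary once extended on $\ker\Delta$), so a single application of the lemma with $A=B=|\Delta|^{1/2}C$ gives $\Vert C^{\dagger}\Delta C\Vert_{*}\le\Tr\left(|\Delta|\,CC^{\dagger}\right)\le\Vert\Delta\Vert_{\infty}$, the last inequality because $CC^{\dagger}$ is a density operator. Your route buys two things: it dispenses with the Schmidt decomposition entirely, and it invokes the lemma exactly in the form it is stated (unitary middle factor), so the key estimate is airtight in one stroke; the paper's route, in exchange, keeps the Schmidt weights explicit, which makes the final optimization over input states more visible. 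One technicality you share with the paper and should state explicitly: the lemma is phrased for square matrices, so either restrict the ancilla to dimension $d$ (legitimate, since the Schmidt rank of $\ket{\psi_{SA}}$ is at most $d$) or note that the argument extends to rectangular $C$.
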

\begin{proof}
Let 
\[
\ket{\psi_{SA}}=\sum_{i}^{d}\sqrt{p_{i}}\ket i_{S}\ket{e_{i}}_{A},
\]
where $p_{i}\ge0$ and $\sum_{i}^{d}p_{i}=1$. We then have
\begin{eqnarray*}
 &  & (\mathcal{E}_{1}-\mathcal{E}_{2})\left(\ket{\psi_{SA}}\bra{\psi_{SA}}\right)\\
 & = & (\mathcal{E}_{1}-\mathcal{E}_{2})\left(\sum_{ij}\sqrt{p_{i}p_{j}}\ket i_{S}\bra j\otimes\ket{e_{i}}_{A}\bra{e_{j}}\right)\\
 & = & \sum_{ij}\sqrt{p_{i}p_{j}}(\mathcal{E}_{1}-\mathcal{E}_{2})\left(\ket i_{S}\bra j\right)\otimes\ket{e_{i}}_{A}\bra{e_{j}}\\
 & = & \sum_{ij}\sqrt{p_{i}p_{j}}\sum_{\mu}\Tr{\left[(M_{\mu}-N_{\mu})\ket i_{S}\bra j\right]}\ket{\mu}\bra{\mu}\otimes\ket{e_{i}}_{A}\bra{e_{j}}\\
 & = & \sum_{ij}\sqrt{p_{i}p_{j}}\sum_{\mu}\bra j(M_{\mu}-N_{\mu})\ket i\ket{\mu}\bra{\mu}\otimes\ket{e_{i}}_{A}\bra{e_{j}}\\
 & = & \left[\begin{array}{cc}
\sum_{ij}\sqrt{p_{i}p_{j}}\bra j(M_{1}-N_{1})\ket i\ket{e_{i}}_{A}\bra{e_{j}} & \mathbf{0}_{D\times D}\\
\mathbf{0}_{D\times D} & \sum_{ij}\sqrt{p_{i}p_{j}}\bra j(M_{2}-N_{2})\ket i\ket{e_{i}}_{A}\bra{e_{j}}
\end{array}\right].
\end{eqnarray*}
Since $M_{1}+M_{2}=\mathbb{I}_{d\times d}$ and $N_{1}+N_{2}=\mathbb{I}_{d\times d}$,
we have $M_{1}-N_{1}=-(M_{2}-N_{2})$ and
\[
\left\Vert M_{1}-N_{1}\right\Vert _{*}=\left\Vert M_{2}-N_{2}\right\Vert _{*}
\]
Therefore 
\begin{eqnarray*}
 &  & \left\Vert (\mathcal{E}_{1}-\mathcal{E}_{2})\left(\ket{\psi_{SA}}\bra{\psi_{SA}}\right)\right\Vert _{*}\\
 & = & 2\left\Vert \sum_{ij}\sqrt{p_{i}p_{j}}\bra j(M_{1}-N_{1})\ket i\ket{e_{i}}_{A}\bra{e_{j}}\right\Vert _{*}\\
 & = & 2\left\Vert \left(\begin{array}{cccc}
\sqrt{p_{1}}\\
 & \sqrt{p_{2}}\\
 &  & \ddots\\
 &  &  & \sqrt{p_{D}}
\end{array}\right)(M_{1}-N_{1})^{T}\left(\begin{array}{cccc}
\sqrt{p_{1}}\\
 & \sqrt{p_{2}}\\
 &  & \ddots\\
 &  &  & \sqrt{p_{D}}
\end{array}\right)\right\Vert _{*}\\
 & = & 2\left\Vert D(M_{1}-N_{1})^{T}D\right\Vert _{*}\\
 & = & 2\left\Vert DU\Lambda U^{\dagger}D\right\Vert _{*}
\end{eqnarray*}
where we defined $D\equiv diag(\sqrt{p_{1}},\,\sqrt{p_{2}},\:\cdots,\:\sqrt{p_{d}})$
and diagonalized $(M_{1}-N_{1})^{T}=U\Lambda U^{\dagger}$, $\Lambda=diag(\lambda_{1},\,\lambda_{2},\,\cdots,\,\lambda_{d})$
with $\left|\lambda_{1}\right|\ge\left|\lambda_{2}\right|\ge\cdots\ge\left|\lambda_{d}\right|$. 

Using Lemma \ref{lem:nuclear-norm-maxim} twice we have
\begin{eqnarray*}
\left\Vert DU\Lambda U^{\dagger}D\right\Vert _{*} & \le & \sum_{i}\sigma_{i}(D)\sigma_{i}(\Lambda)\sigma_{i}(D)\\
 & \le & \sum_{i}\tilde{p}_{i}\left|\lambda_{i}\right|,
\end{eqnarray*}
where $\tilde{p}_{i}$ denote $p_{i}$ in descending order and we
also used the fact that singular values are the absolute values of
eigenvalues $\sigma_{i}=\left|\lambda_{i}\right|$. The weight sum
$\sum_{i}\tilde{p}_{i}\left|\lambda_{i}\right|$ achieves maximum
when we put all the weight on the largest element $\left|\lambda_{1}\right|$,
i.e. $\tilde{p}_{1}=1$ and $\tilde{p}_{i\ne1}=0$. Therefore
\[
\left\Vert DU\Lambda U^{\dagger}D\right\Vert _{*}\le\left|\lambda_{1}\right|
\]
and 
\begin{eqnarray*}
\left\Vert \mathcal{E}_{1}-\mathcal{E}_{2}\right\Vert _{\diamond} & = & \frac{1}{2}\max_{\rho_{SA}}\left\Vert \mathcal{E}_{1}\otimes\mathbb{I}(\rho_{SA})-\mathcal{E}_{2}\otimes\mathbb{I}(\rho_{SA})\right\Vert _{*}\\
 & \le & \left|\lambda_{1}\right|.
\end{eqnarray*}
 It is simple to verify that the equality is indeed achievable when
we pick 
\[
\ket{\psi_{SA}}=\ket{\phi_{1}}_{S}\otimes\ket{\psi}_{A},
\]
 where $\ket{\phi_{1}}_{S}$ is the eigenvector of $(M_{1}-N_{1})$
corresponding to the eigenvalue with largest absolute value $\left|\lambda_{1}\right|$. 
\end{proof}


%

\end{document}